\documentclass[aps,twocolumn,english,pra,superscriptaddress]{revtex4-1}
\usepackage[T1]{fontenc}
\usepackage[latin9]{inputenc}
\setcounter{secnumdepth}{3}
\usepackage{babel}
\usepackage{verbatim}
\usepackage{float}
\usepackage{amsmath}
\usepackage{amsthm}
\usepackage{amssymb}
\usepackage{graphicx}
\usepackage[unicode=true,pdfusetitle,
bookmarks=true,bookmarksnumbered=false,bookmarksopen=false,
breaklinks=false,pdfborder={0 0 1},backref=false,colorlinks=false]
{hyperref}

\usepackage{subfigure,bbm}
\newtheorem{theorem}{Theorem}

\newtheorem{proposition}[theorem]{Proposition}
\newtheorem{remark}[theorem]{Remark}

\makeatother

\newcommand{\ket}[1]{|#1\rangle}
\newcommand{\bra}[1]{\langle #1|}
\newcommand{\be}{\begin{equation}}
\newcommand{\ee}{\end{equation}}
\newcommand{\ba}{\begin{align}}
\newcommand{\ea}{\end{align}}



\DeclareMathOperator{\sech}{sech}

\makeatletter

\providecommand{\tabularnewline}{\\}

\begin{document}

\title{A continuous-variable quantum repeater \\based on quantum scissors and mode multiplexing}

\author{Kaushik P. Seshadreesan}
\email{kaushiksesh@email.arizona.edu}
\affiliation{College of Optical Sciences, University of Arizona, Tucson, AZ 85721}

\author{Hari Krovi}
\affiliation{Quantum Engineering and Computing Physical Sciences and Systems, Raytheon BBN Technologies, Cambridge, MA 02138, USA}

\author{Saikat Guha}
\affiliation{College of Optical Sciences, University of Arizona, Tucson, AZ 85721}

\date{\today}
\begin{abstract}
Quantum repeaters are indispensable for high-rate, long-distance quantum communications. The vision of a future quantum internet strongly hinges on realizing quantum repeaters in practice. Numerous repeaters have been proposed for discrete-variable (DV) single-photon-based quantum communications. Continuous variable (CV) encodings over the quadrature degrees of freedom of the electromagnetic field mode offer an attractive alternative. For example, CV transmission systems are easier to integrate with existing optical telecom systems compared to their DV counterparts. Yet, repeaters for CV have remained elusive. We present a novel quantum repeater scheme for CV entanglement distribution over a lossy bosonic channel that beats the direct transmission exponential rate-loss tradeoff. The scheme involves repeater nodes consisting of a) two-mode squeezed vacuum (TMSV) CV entanglement sources, b) the quantum scissors operation to perform nondeterministic noiseless linear amplification of lossy TMSV states, c) a layer of switched mode multiplexing inspired by second-generation DV repeaters, which is the key ingredient apart from probabilistic entanglement purification that makes DV repeaters work, and d) a non-Gaussian entanglement swap operation. We report our exact results on the rate-loss envelope achieved by the scheme.
\end{abstract}

\maketitle

\section{Introduction}

A {\it quantum internet} \cite{Kimble2008} that distributes entanglement and quantum-secure shared secret key at high rates over large distances exemplifies the goal of quantum communications~\cite{GT07}. It would enable, e.g., unconditionally secure multiparty classical communications~\cite{SPCDLP09}, distributed versions of quantum computation, sensing, and other quantum information processing applications~\cite{PKD18,GJEGF18,ZZS18,RvM16,DDKP07,BR03}. The main hurdle in the way of establishing the quantum internet is photon loss. Entanglement distribution rates over a lossy bosonic channel such as an optical fiber or free space link are known to drop exponentially with loss~\cite{TGW14}. The entanglement distribution capacity of the pure loss bosonic channel of transmissivity $\eta$ under unlimited two-way local operations and classical communication (LOCC) assistance was recently established to be $C_{\textrm{direct}}(\eta)=-\log_2(1-\eta)$ entangled qubit pairs (also called ebits) per mode~\cite{PLOB17, PGBL09} (see also~\cite{WTB17} for a strong converse bound~\footnote{In 2015, Pirandola et al.~\cite{PLOB17} proved $-\log_2(1-\eta)$ ebits/mode as a weak converse upper bound for entanglement distribution over a pure loss channel, which along with the achievability of the same rate proven in~\cite{PGBL09} established it as the capacity. In 2016, Wilde et al.~\cite{WTB17} proved $-\log_2(1-\eta)$ ebits/mode as a strong converse upper bound.}).

{\it Quantum repeaters}~\cite{MATN15,BDCZ98} comprised of entanglement sources, distillation schemes and memories when interspersed over the channel can circumvent this exponential rate-loss tradeoff. For discrete-variable (DV) quantum information encodings such as over quantum states of single photons over their polarization or time-bin degrees of freedom, repeater schemes~\cite{MLKLLJ16,GKFDJC15} based on matter memories \cite{SSMSGR14} as well as optical memories~\cite{MKEG17,ATL15} have been developed. Alternatively, quantum information can also be encoded in the continuous quadrature degrees of freedom of electromagnetic field modes, known as quantum continuous variables (CV). Since CV quantum states reside in infinite dimensional Hilbert spaces, they can hold substantially more quantum information compared to DV states. Also, they can be generated using coherent lasers and nonlinear optics without the need for single-photon detectors, which allows for easier integration with classical telecommunications compared to DV. However, quantum repeaters for CV remain to be well established.

\begin{figure*}[t]
	\begin{tabular}{c}
		\includegraphics[scale=0.2275]{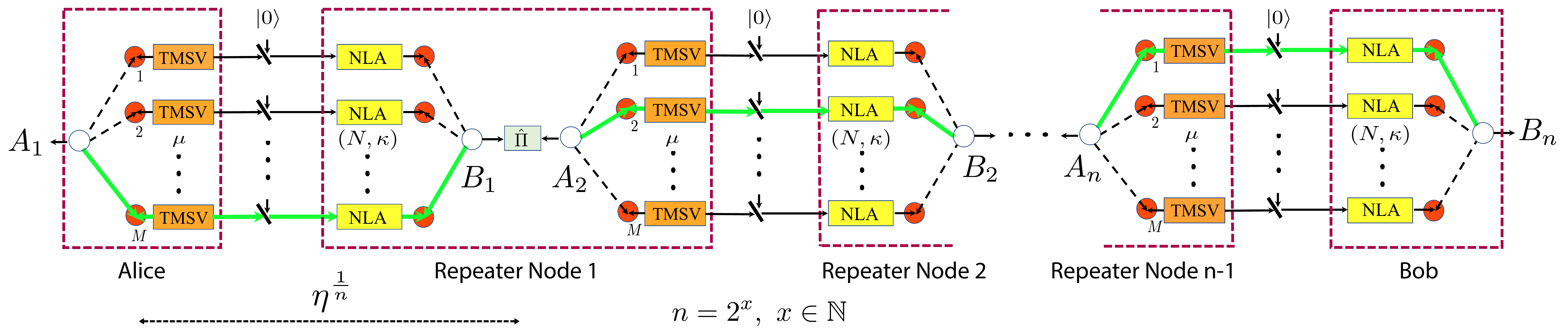}\tabularnewline
	\end{tabular}
	\caption{Mode-multiplexed quantum repeater scheme for CV entanglement distribution based on repeater nodes consisting of two-mode squeezed vacuum (TMSV) state sources, $N-$quantum scissors noiseless linear amplification (NLA), quantum memories (denoted by red circles) and fast optical switches (denoted by white circles with solid, bold, green arrows). The operator $\hat{\Pi}$ refers to a non-Gaussian entangled state projection as described in~(\ref{eq:Bell pair}). The optical switches toggle on to the modes where the NLA operation is successful as indicated in bold (green). $M$ denotes the degree of mode multiplexing.}
	\label{repeater_arch}
\end{figure*}

It is known that for Gaussian CV states, i.e., states with Gaussian quadrature distributions, Gaussian quantum operations, namely physical operations that map Gaussian states to other Gaussian states, alone cannot act as quantum repeaters~\cite{NGGL14,HOvL11}. For the two-mode Gaussian CV state resulting from the action of a pure loss channel on one mode of a two-mode squeezed vacuum (TMSV) entangled state, Ralph proposed a scheme based on non-deterministic noiseless linear amplification (NLA)~\cite{RL09} that probabilistically performs error correction~\cite{Ralph2011}. When the mean photon numbers are small, NLA can be implemented to a good approximation in a heralded fashion by the probabilistic, non-Gaussian quantum scissors operation~\cite{RL09,PPB98}. Dias and Ralph~\cite{DR2017,DR2018} showed that the state heralded upon successful operation of a single quantum scissors on one mode of a lossy TMSV state is more entangled than the lossy TMSV state in terms of the logarithmic negativity~\cite{VW02,Plenio05} and entanglement of formation~\cite{BDSW96} measures. Similarly, the present authors~\cite{SKG181} evaluated the reverse coherent information (RCI)~\cite{GPLS09,PGBL09,DJKR06,DW05,Hcube00} of the state heralded by NLA with multiple quantum scissors on one mode of a lossy TMSV state. The RCI is a lower bound on the distillable entanglement of a state, the latter being the number of ebits that can be distilled from an asymptotically large number of copies of the state using LOCC. It was shown~\cite{SKG181} that the RCI heralded using the (multiple) quantum scissors can exceed $C_{\textrm{direct}}(\eta)$--- a necessary condition for a distillation scheme to be useful in a repeater scheme over the pure loss channel of transmissivity $\eta$. The CV error correction scheme of~\cite{Ralph2011} was recently generalized to the thermal noise channel~\cite{TDR18}. NLA, both ideal~\cite{BLBEGT12}, and approximate, based on the quantum scissors~\cite{GORPR18}, were shown to increase the range of CV quantum key distribution (QKD) over the channel.

In this paper, using repeater nodes consisting of TMSV sources for CV entanglement generation, NLA based on the quantum scissors for entanglement distillation, a layer of switched mode multiplexing, e.g., over spectral or spatial modes, and a non-Gaussian Bell measurement~\cite{FM18} for entanglement swapping, we present a CV quantum repeater scheme (Fig.~\ref{repeater_arch}) that outperforms $C_{\textrm{direct}}(\eta)$. We show that for the proposed scheme NLA based on a single quantum scissors is optimal for entanglement distillation at the repeater nodes compared to any higher number of scissors. This is because the product of the heralded RCI and the heralding success probability at the nodes, when numerically optimized over the free parameters of the system, is found to be maximal for a single quantum scissors (Fig.~\ref{binder1} (a)). We then show that the optimal RCI heralded at the nodes with a single quantum scissors in the limit of infinite NLA gain approaches $1$ independently of the elementary channel segment transmissivity $t$. This implies that the optimal heralded state across a channel segment approaches a perfect ebit, whose RCI by definition is one. The corresponding success probability is found to scale proportional to $t$ when $t\ll1$ (Figs.~\ref{binder1} (b) and (c)). Though the limiting case is unphysical, it carries semblance to DV repeaters, where entanglement distillation is typically based on the successful detection of photons arriving at a repeater node, such that a successful detection event heralds a perfect ebit of entanglement and the detection success probability scales proportional to the transmissivity of the repeater link. This prompts us to consider switched multiplexing over multiple modes (spectral, temporal, spatial, or a combination of any of these) between each pair of adjacent nodes in the proposed CV repeater scheme similar to the so-called second generation DV repeater schemes~\cite{MLKLLJ16}, where mode multiplexing was shown to enable the end-to-end per-mode rates to beat direct transmission~\cite{GKFDJC15, KGDS16}. We show that the rate-loss tradeoff in the DV-like ($\sim1$ ebit/mode) manner of operating the proposed mode-multiplexed CV repeater also similarly beats the direct transmission rate-loss tradeoff (Fig.~\ref{repeater_ratecurve}). However, it is still sub-optimal for the CV repeater. We derive an explicit iterative analytic formula for the end-to-end noisy entangled quantum state heralded across the CV repeater chain indicated in bold, green, in Fig.~\ref{repeater_arch} for any $n_{\textrm{rep}}=2^x-1, \ x\in\mathbb{N}$ number of repeater nodes. Using the exact expression for the end-to-end heralded quantum state, we identify a different operating point in terms of the entanglement source, distillation and swapping parameters, degree of mode multiplexing, and the number of repeater nodes. The said operating point in parameter space results in the individual repeater link states being far from perfect ebits, but with a higher heralding success probability compared to the DV-like mode of operation, thereby resulting in a superior overall end-to-end entanglement distribution rate-loss tradeoff across the repeater chain (Fig.~\ref{real_repeater}).

\begin{figure}[H]
	\centering
	\begin{tabular}{c}
		\includegraphics[scale=0.7]{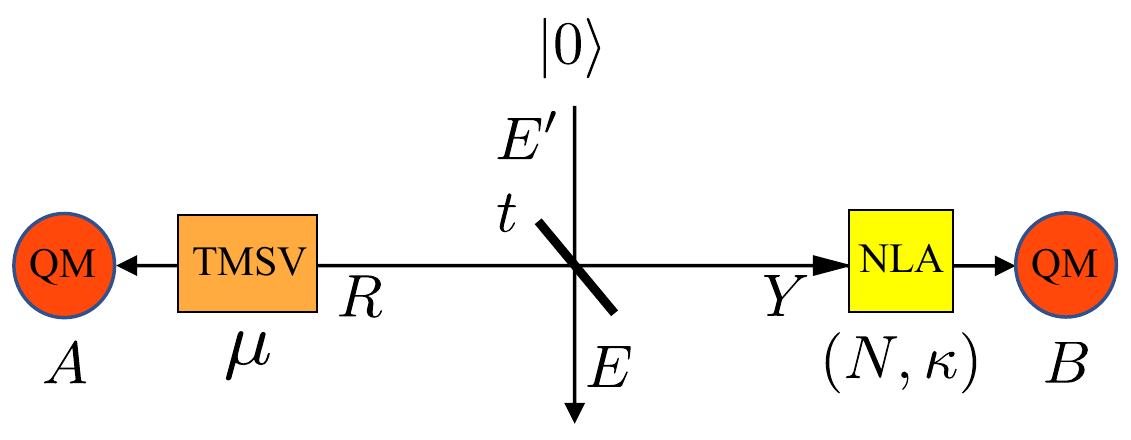}\tabularnewline
	\end{tabular}
	\caption{A single repeater link between adjacent repeater nodes from the CV quantum repeater scheme of Fig.~\ref{repeater_arch}. QM denotes quantum memory.}
	\label{half channel}
\end{figure}

Though our analysis of the proposed CV repeater assumes a pure loss channel model ignoring additional thermal noise encountered in practice, it must be emphasized that our results are a big first step in proving the validity of the concept behind the repeater. Note that in prior work, Furrer and Munro had proposed a CV repeater scheme for the pure loss channel based on alternative non-Gaussian entanglement distillation schemes---symmetric photon replacement and purifying distillation~\cite{Furasek10,BESP03}, which beats $C_{\textrm{direct}}(\eta)$~\cite{FM18}. It is a so-called first generation repeater scheme as per the classification introduced in~\cite{MLKLLJ16} since it involves iterative use of entanglement distillation, which necessitates classical communication between repeater nodes beyond nearest neighbors. On the other hand, our repeater scheme based on mode multiplexing is a second generation scheme that only requires nearest neighbor classical communications. Also, the quantum scissors in comparison to these other distillation schemes involves fewer DV resources, i.e., single photon sources and photon number resolving detectors, making it simpler to implement. Further, while the analysis presented in~\cite{FM18} considers a Gaussified version of the end-to-end heralded non-Gaussian state, our analysis is based on the exact non-Gaussian state heralded across the repeater chain.

This work implicitly assumes that the repeater nodes have access to fast optical switches and multimode quantum memories~\cite{YZHL18,JTLE16,ZHAB15,SSMSGR14,ASDG09} of total effective loss-per-unit-time (inclusive of induced decoherence and coupling losses) less than that of the repeater links connecting adjacent repeater nodes. The quantum scissors and the non-Gaussian entanglement swap operations at the repeater nodes are assumed to be based on ideal single photon sources and photon number resolving detectors.

The paper is organized as follows. Section~\ref{sec:Elementary-Link} presents a detailed analysis of the elementary CV repeater link that constitutes the repeater chain of Fig.~\ref{repeater_arch}. Section~\ref{sec: ngswap} describes the non-Gaussian entanglement swap operation that connects adjacent repeater links in the repeater chain. Section~\ref{sec: MM idea} elucidates the concept behind the mode-multiplexed repeater. Section~\ref{sec: MMCV} contains our main results on the achievable rate-loss tradeoff for the proposed CV repeater scheme based on quantum scissors and mode multiplexing. Section~\ref{sec: disc} concludes the paper with a discussion on questions that are left open in this work and some possible directions for future work.

\section{CV repeater link based on the quantum scissors}
\label{sec:Elementary-Link}
Consider a single repeater link from the proposed CV repeater scheme of Fig.~\ref{repeater_arch}. The link, as shown in Fig. \ref{half channel}, consists of a channel segment of transmissivity, say $t$, a TMSV entangled source of mean photon number $\mu$, NLA of gain $g=\sqrt{(1-\kappa)/\kappa}$ implemented by $N-$quantum scissors (where $\kappa$ is an intrinsic parameter of the scissors), and quantum memories $A$ and $B$. For $N>1,$ the quantum scissors-based NLA module does the following~(c.f. \cite[Fig.~1]{SKG181}): i) Splits the signal quantum state (one share of a lossy TMSV state in this case) into $N$ equal parts. ii) Each subsignal undergoes the quantum scissors operation described in~\cite{PPB98,OMKI01}, which involves linear optics, single photon injection and detection, and as the name suggests truncates the sub-signal quantum state in Fock space to its support on the subspace spanned by the 0 and 1 photon Fock states. (See~\cite{LR94,ISWD97,LK11,LKKLB16} for a related notion of quantum scissors involving nonlinear optical elements). iii) Recombines the "chopped" subsignals into one mode. When the NLA succeeds, it heralds a noiselessly amplified~\cite{PJCC13} version of the signal state that is truncated to its support on the $N-$photon subspace spanned by $0,1,\dots,N$ Fock states. Appendix~\ref{rci_appen} describes the state heralded across the CV repeater link of Fig.~\ref{half channel} in the Fock basis, along with the associated heralding success probability, and an expression for the RCI of the state.

We numerically optimized the {\it true} RCI of the repeater link state, namely the product of the heralded RCI and the heralding success probability, over the TMSV mean photon number and the gain of the quantum scissors, for different number of quantum scissors. The results are plotted in Fig.~\ref{binder1} (a). The channel is assumed to be an optical fiber of attenuation $0.2\ \textrm{dB/km}$, so that the transmissivity of the repeater link as a function of distance $L$ (in km) is $t=10^{-0.02 L}$. Firstly, all the curves lie below $C_\textrm{direct}(t)$, as should be the case by the very definition of capacity. Secondly, the optimal true RCI is the highest for the single quantum scissors. This observation suggests that it is optimal to use NLA based on a single quantum scissors compared to any higher number of scissors for entanglement distillation across the CV repeater link of Fig.~\ref{half channel}.

\begin{figure}
	\includegraphics[scale=0.85]{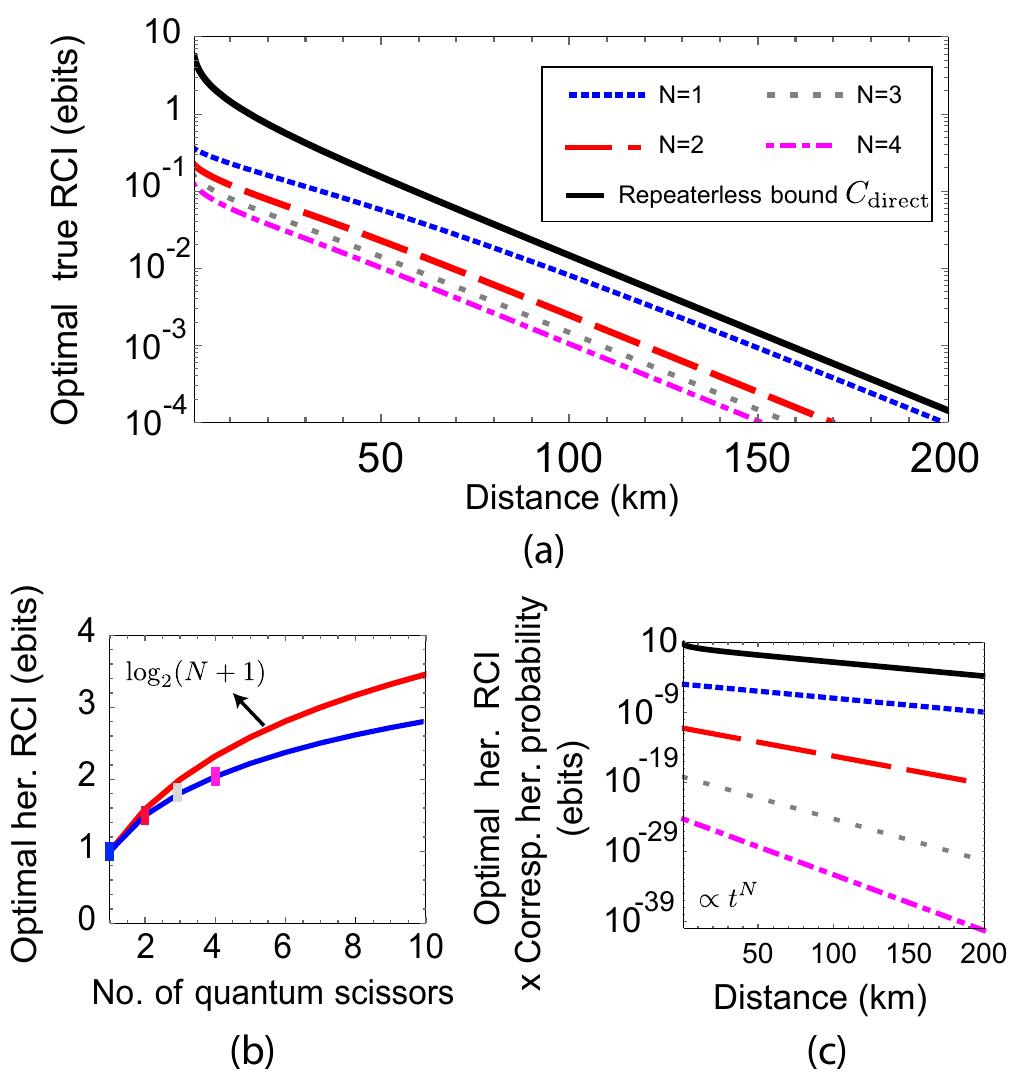}\tabularnewline
	\centering\caption{(a) The optimal true RCI of the repeater link state. (b) The optimal heralded RCI and (c) the corresponding heralding success probability (scaled by the former). The channel is assumed to be an optical fiber of attenuation $0.2\ \textrm{dB/km}$.}
	\label{binder1}
\end{figure}

Further, we also numerically optimized the RCI alone over the same set of parameters as heralded by $N=1,2,3,4$ quantum scissors. The optimal heralded RCI is plotted in Figure~\ref{binder1} (b) as a function of $N$. It is found to approach a limiting constant independent of the channel transmissivity $t$. The constant for a single quantum scissors is found to be $\log_2(2)=1$, which corresponds to the distillation of a perfect ebit of entanglement. For $N>1$, the constant is found to be less than $\log_{2}\left(N+1\right)$, where $N+1$ is the dimensionality of the output Hilbert space, which indicates that the optimal heralded entangled states do not approach perfect ``e-$d$its'' except when $N=1$. In Fig.~\ref{binder1} (c), the asymptotic scaling ($t\ll1$) of the heralding success probability corresponding to the optimal RCI (scaled by the former) is plotted as a function of the transmission distance for different $N$ and found to be $\propto t^{N}$.

Having identified the optimality of NLA based on a single quantum scissors for entanglement distillation (Fig.~\ref{binder1} (a)), we will focus on the CV repeater link consisting of a single quantum scissors for the rest of this paper. It is noteworthy that for this optimal NLA configuration in the CV repeater link in the high loss limit, both the optimal true RCI of  Fig.~\ref{binder1} (a) and the product of the optimal RCI and its corresponding heralding success probability of Fig.~\ref{binder1} (c) scale $\propto t$. However, the former exceeds the latter by several orders of magnitude.


\section{Non-Gaussian entanglement swap}\label{sec: ngswap}
The state heralded across the repeater link of Fig.~\ref{half channel} with a single quantum scissors can be expressed as~\cite{BASRL14}
\begin{align}
|{\psi}\rangle_{ABL} \propto \left(1 +\xi a^\dag b^\dag\right)\sigma_{AL}^{\rho}|{0}\rangle_{ABL}\,\label{bernu1}
\end{align}
where $\hat{a},\ \hat{b}$ are the repeater link mode operators, $L$ is the loss mode, $\xi$ a function of $\mu,\ \kappa,\ t$, and $\sigma_{AL}^{\rho}$ is the two-mode squeezing operator corresponding to squeezing of magnitude $\rho$ in modes $A,\ L$, where $\tanh \rho=\sqrt{1-t}\tanh (\sinh^{-1}\sqrt{\mu})$. See Appendix~\ref{iterform_app} for the exact description with the proportionality constant. Clearly the state in (\ref{bernu1}) is non-Gaussian. In the limit of low TMSV mean photon number, to first approximation, the state in modes $A,\ B$ is a pure state of the form
\begin{equation}
|\psi\rangle_{A_{1}B_{1}}=\left(\left|0\right\rangle _{A_{1}}\left|0\right\rangle _{B_{1}}+\xi\left|1\right\rangle _{A_{1}}\left|1\right\rangle _{B_{1}}\right)/\sqrt{1+\xi^{2}}\label{eq:Bell pair}.
\end{equation} 
At a repeater node, the entanglement in two such repeater link states $|\psi\rangle_{A_{1}B_{1}}$
and $|\psi\rangle_{A_{2}B_{2}}$ can thus be swapped by a non-Gaussian entangled projection operator of the form $\hat{\Pi}=|\phi\rangle\left\langle \phi\right|_{B_{1}A_{2}},$
where $|\phi\rangle_{B_{1}A_{2}}=\left(\left|0\right\rangle _{B_{1}}\left|0\right\rangle _{A_{2}}+q\left|1\right\rangle _{B_{1}}\left|1\right\rangle _{A_{2}}\right)/\sqrt{1+q^{2}},$ with $q=1/\xi$.
Such a projection can be implemented by Fock state filtering~\cite{Furasek10, FM18} and a sequence of displacement operations, photon subtraction and vacuum projection \cite{FM18}. See Appendix~\ref{Appendix: Non-Gaussian-Entanglement-Swap} for details about the implementation and the associated success probability.

\section{Quantum Repeater based on mode multiplexing}\label{sec: MM idea}
In order to describe the idea behind a mode-multiplexed quantum repeater, let us for the moment consider a single-photon-based DV analogue of the CV quantum repeater scheme of Fig.~\ref{repeater_arch}. Let the repeater chain consist of $n=2^x,\ x\in\mathbb{N}$ links (i.e., number of repeater nodes $n_{\textrm{rep}}=2^x-1$) so that the transmissivity of a single link is $t=\eta^{1/n}$. In the DV scheme, entanglement sources at Alice and the repeater nodes generate perfect maximally entangled photon pairs (say polarization Bell pairs), of which one of the photons is transmitted through the lossy channel segment and the other retained in a quantum memory. Successful heralding of the arrival of the transmitted photon at the next node after loss marks the distillation a perfect ebit (RCI $I_R$=1)---an event that happens with a probability $p\propto t=c\eta^{1/n}$. At each repeater node, one local photon that was retained in a quantum memory and one received through the channel are combined on a Bell-basis entangling measurement. The measurement succeeds with a probability $p_{\textrm{swap}}$, accomplishing entanglement swap across the repeater node.

By introducing a layer of mode multiplexing between each pair of adjacent nodes, e.g., using a large number of spectral, temporal, or spatial modes (or a combination of any of these) from the entanglement source, multimode quantum memories, and fast optical switches, the success probability $p$ can be boosted. For $M$ multiplexed modes, the probability that at least one
of them succeeds in distilling an ebit of entanglement is given by
\begin{equation}
p_{M}=1-\left(1-c\eta^{1/n}\right)^{M}.\label{eq:pM_expression}
\end{equation}
For an $n-$link chain, where each link is $M-$mode multiplexed,
the rate at which an ebit of entanglement can be distributed between
the end nodes equals the probability that at least one of the $M$
modes succeeds in each of the $n$ links and the entanglement swaps
at each of the $n-1$ repeater nodes succeeds. It is given by the rate $R$ (in units of ebits/mode) that obeys
\begin{align} 
M\times R & =p_{M}^{n}p_{\textrm{swap}}^{n-1}\leq\begin{cases}
p_{\textrm{swap}}^{n-1}\\
\left(Mc\right)^{n}\eta p_{\textrm{swap}}^{n-1}
\end{cases}\label{eq:multiplexed Rep rate}
\end{align}
From the first upper bound in (\ref{eq:multiplexed Rep rate}), we
have $n=\log\left(M\times p_{\textrm{swap}}\times R_{\textrm{UB}}\right)/\log p_{\textrm{swap}}.$ The two upper bounds intersect
at $\eta=1/\left(Mc\right)^{n}$. From the intersection, we have $n=-\log\eta/\log\left(Mc\right).$
Combining the two, we have 
\begin{align}
&\log\left(M\times p_{\textrm{swap}}\times R_{\textrm{UB}}\right)  =\left(\frac{\log\left(1/p_{\textrm{swap}}\right)}{\log\left(Mc\right)}\right)\log\eta\\
&\Rightarrow  R_{\textrm{UB}}=\frac{1}{M\times p_{\textrm{swap}}}\eta^{\tau},\ \tau=\frac{\log\left(1/p_{\textrm{swap}}\right)}{\log\left(Mc\right)}.\label{eq:Rep s exponent}
\end{align}
For $M>1/(p_{\textrm{swap}}c)$, $\tau<1$. This beats the direct transmission capacity $C_{\textrm{direct}}(\eta)=-\log(1-\eta)$ when $\eta\ll1$ since the latter becomes $\approx1.44\eta$ in the limit. The rate $R_{\textrm{UB}}$ represents an upper bound on the envelope of achievable rates that covers the rates obtainable by varying the number of repeater nodes in the scheme. The exact envelope of achievable rates with the repeater scheme was shown to be~\cite{GKFDJC15}
\begin{align}
R=\frac{1}{M\times p_{\textrm{swap}}}\eta^{s},\ s=\frac{\log\left(p_{\textrm{swap}}\left(1-\left(1-cz\right)^M\right)\right)}{\log z},\label{eq:Rep t exponent}
\end{align}
where $z$ is the unique solution of the transcendental equation
\begin{align}
&\left(1-\left(1-cz\right)^M\right)\log\left(p_{\textrm{swap}}\left(1-\left(1-cz\right)^M\right)\right)\nonumber\\
&=cMz\log z\left(1-cz\right)^{M-1}.\label{eq:zsoln}
\end{align}
Thus, the rate-loss envelope achieved by the mode-multiplexed repeater scheme in the limit of high loss obeys a power law scaling given by $R\propto\eta^s,\ 0<s<1$, and beats $C_{\textrm{direct}}$ which corresponds to $s=1$. The smaller the value of $s$, the greater the range of distances where there is an advantage over direct transmission.

\section{Mode-multiplexed CV repeater}\label{sec: MMCV}
We now present our main results on the entanglement distribution rate-loss tradeoffs achievable with the mode-multiplexed CV repeater scheme of Fig.~\ref{repeater_arch}.

\subsection{DV-like $(\sim1\ \textrm{ebit/mode})$ operation} \label{subsec dv}
Firstly, we consider the case where the quantum scissors at the CV repeater nodes are operated such that the state heralded across a repeater link is a near-perfect ebit with a heralding success probability $\propto t$ (the proportionality constant being $c=5\times10^{-6}$ in the limit $t\ll1$). Since the CV repeater scheme becomes similar to a DV repeater scheme in this case, a direct application of the power-law rate formula of (\ref{eq:Rep t exponent}) along with the numerically optimized value of $p_{\textrm{swap}}=0.00463$ (obtained from (\ref{pphys}) of Appendix~\ref{Appendix: Non-Gaussian-Entanglement-Swap} for $\xi=1$) yields an achievable rate-loss envelope for the mode-multiplexed CV repeater scheme of Fig.~\ref{repeater_arch} as a function of the degree of multiplexing $M$. Figure~\ref{repeater_ratecurve} shows such achievable rate-loss envelopes for $M=10^{10},\ 10^{12}$. The end-to-end channel is once again assumed to be an optical fiber with attenuation $0.2\ \textrm{dB/km}$, i.e., the transmissivity as a function of the communication distance $L$ (in km) is $\eta=10^{-0.02 L}$. The rates are expressed in units of ebits/sec (ebps), where the rate $R$ of (\ref{eq:Rep t exponent}) (in ebits/mode) has been multiplied by a source repetition rate taken to be $R_{\textrm{rep}}=1\ \textrm{MHz}$ modes/sec. With $M\sim 10^{10},$ the rate-loss envelope of the scheme attains a power-law scaling exponent of $s=0.54$, which beats $C_{\textrm{direct}}$ at $851$ km transmission distance. Likewise, $M\sim 10^{12}$ yields an envelope scaling exponent of $s=0.37,$ which beats $C_{\textrm{direct}}$ at $780$ km.

\begin{figure}
	\begin{tabular}{c}
		\includegraphics[scale=0.675]{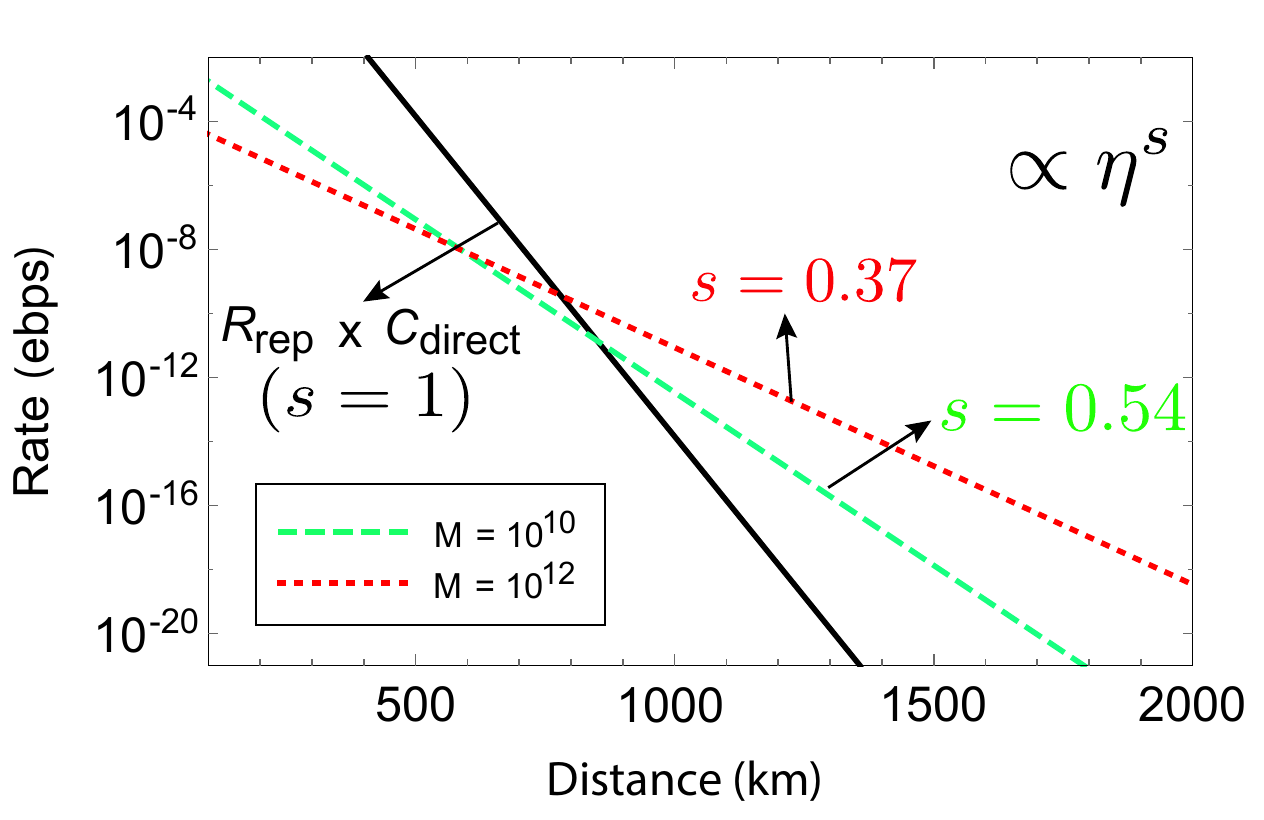}\tabularnewline
	\end{tabular}
	
	\centering
	
	\caption{Envelope of achievable entanglement distribution rates obtained by varying the number of repeater nodes for the DV-like mode of operation of the mode-multiplexed CV repeater scheme presented in Sec.~\ref{subsec dv}. The different $s-$parameter plots correspond to different degrees of multiplexing $M$. The channel is assumed to be an optical fiber of attenuation $0.2\ \textrm{dB/km}$. Source repetition rate $R_{\textrm{rep}}=1\ \textrm{MHz}$.}
	\label{repeater_ratecurve}
\end{figure}

\begin{table}[ht]
	\begin{center}
		\begin{tabular}{cccc}
			\hline
			$ \log_{10}M     $&$\ \ \ \ \ \ s\ \ \ \ \  \ $&$\ \   L_{\textrm{cross}} \textrm{(km)} \ \ $& $R_{\textrm{cross}}\ \textrm{(ebps)}$\\
			\hline\hline
			$<9$ & $-$ & $-$ & $-$\\
			9 & 0.68 & $1066$ & $6.8 \times 10^{-16}$\\
			10 & 0.54 &$851$ & $1.4 \times 10^{-11}$\\
			11 & 0.44 &$788$ & $2.5 \times 10^{-10}$\\						
			12 & 0.37 &$780$ & $3.7 \times 10^{-10}$\\			
			13 & 0.32 &$796$ & $1.7 \times 10^{-10}$\\			
			14 & 0.29 &$833$ & $3.2 \times 10^{-11}$\\			
			15 & 0.26 &$867$ & $6.7 \times 10^{-12}$\\
			16 & 0.23 &$898$ & $1.6 \times 10^{-12}$\\
			\hline
		\end{tabular}
	\end{center}
	\label{tab:DV}
	\caption{Mode-multiplexed CV repeater scheme under the DV-like mode of operation presented in Sec.~\ref{subsec dv} over an optical fiber of attenuation $0.2\ \textrm{dB/km}$: Rate-loss envelope scaling exponent $s$, crossover distance $L_{\textrm{cross}}$ and the corresponding rate $R_{\textrm{cross}}$ at which the repeater-enhanced rate envelope intersects $R_{\textrm{rep}}\times C_{\textrm{direct}}$, as a function of the degree of multiplexing $M$. Source repetition rate $R_{\textrm{rep}}=1\ \textrm{MHz}$.}
\end{table}

For the DV-like mode of operation of the CV repeater, Table~\ref{tab:DV} lists the rate-loss-envelope scaling exponent $s$, and the crossover distance $L_{\textrm{cross}}$ and corresponding rate $R_{\textrm{cross}}$ at which the repeater-enhanced rate-loss envelope intersects $R_{\textrm{rep}}\times C_{\textrm{direct}}$ for different degrees of multiplexing $M$. We draw the following inferences from the table. (i) The value of $M$ has to exceed a threshold ($\sim 10^9$) for the repeater-enhanced rate-loss envelope to beat $R_{\textrm{rep}}\times C_{\textrm{direct}}$. (ii) The exponent $s$, calculated using (\ref{eq:Rep t exponent}) and (\ref{eq:zsoln}), drops monotonically with increasing $M$, which implies greater the value of $M$ greater the range of distances beyond the crossover point where there is an advantage over direct transmission. (iii) The $M-$dependence of $L_{\textrm{cross}}$ and $R_{\textrm{cross}}$ is non-monotonic, and there exists an optimal order of magnitude for $M$ ($\sim 10^{12}$) at which $L_{\textrm{cross}}$ is minimized and $R_{\textrm{cross}}$ maximized.

\begin{figure}
	\begin{tabular}{c}
		\includegraphics[scale=0.8]{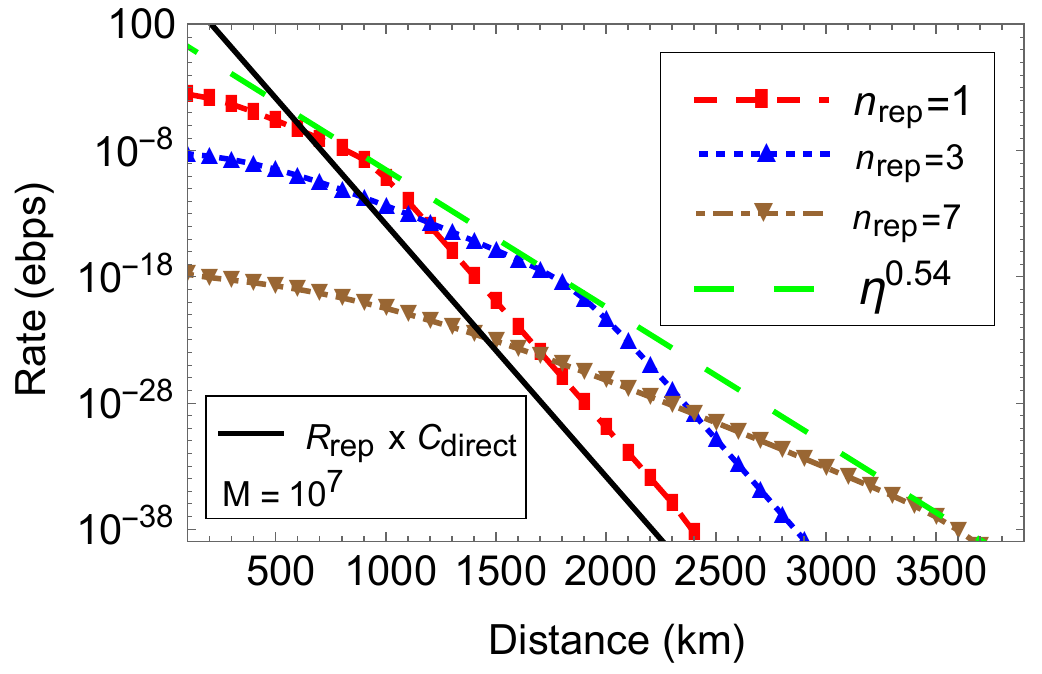}\tabularnewline
	\end{tabular}
	
	\centering
	
	\caption{Entanglement distribution rates achieved by the mode-multiplexed CV repeater scheme under the mode of operation identified in Sec.~\ref{subsec cv}. The value $n_{\textrm{rep}}$ denotes the number of repeater nodes introduced in the channel. The channel is assumed to be an optical fiber of attenuation $0.2\ \textrm{dB/km}$. Source repetition rate $R_{\textrm{rep}}=1\ \textrm{MHz}$.}
	\label{real_repeater}
\end{figure}

\begin{table}[ht]
	\begin{center}
		\begin{tabular}{cccc}
			\hline
			$ \log_{10}M     $&$\ \ \ \ \ \ s\ \ \ \ \  \ $&$\ \ \ \ L_{\textrm{cross}}\ \textrm{(km)}\ \ \ \ $& $R_{\textrm{cross}}\ \textrm{(ebps)}$\\
			\hline\hline
			$<3$ & $-$ & $-$ &$-$\\
			3 & 0.94 & $685$ & $2.5 \times 10^{-8}$\\
			3.4 & 0.87 &$470$ & $5.6 \times 10^{-4}$\\
			3.7 & 0.81 &$400$ & $1.4 \times 10^{-2}$\\						
			4 & 0.76 &$380$ & $3.6 \times 10^{-2}$\\			
			5 & 0.65 &$400$ & $1.2 \times 10^{-2}$\\			
			6 & 0.58 &$455$ & $1.1 \times 10^{-3}$\\			
			7 & 0.54 &$525$ & $4.5 \times 10^{-5}$\\
			8 & 0.52 &$606$ & $1.1 \times 10^{-6}$\\
			9 & 0.5 &$685$ & $3.2 \times 10^{-8}$\\
			10 & 0.49 &$765$ & $6.5 \times 10^{-10}$\\
			\hline
		\end{tabular}
	\end{center}
	\label{tab:CV1}
	\caption{Mode-multiplexed CV repeater scheme under the mode of operation identified in Sec.~\ref{subsec cv} over an optical fiber of attenuation $0.2\ \textrm{dB/km}$: Rate-loss envelope scaling exponent $s$, crossover distance $L_{\textrm{cross}}$ and the corresponding rate $R_{\textrm{cross}}$ at which the repeater-enhanced rate envelope intersects $R_{\textrm{rep}}\times C_{\textrm{direct}}$, as a function of the degree of multiplexing $M$. Source repetition rate $R_{\textrm{rep}}=1\ \textrm{MHz}$.}
\end{table}

\subsection{General operation}\label{subsec cv} 
While the DV-like operation of the CV repeater allows for easy characterization of its achievable rate-loss envelope, it is not optimal. We explore other operating points in parameter space for the repeater scheme by writing down the end-to-end heralded state explicitly for $n=2^x,\ x\in\mathbb{N}$ number of repeater links in the repeater chain, denoted as $\hat{\rho}_{A_1B_{n}}$, and evaluating its RCI. See Appendix~\ref{iterform_app} for an iterative formula for the end-to-end heralded state and its RCI. The rate per mode of a $M-$mode multiplexed repeater chain $R$ is given by
\begin{align}
R=\frac{I_R (\hat{\rho}_{A_1B_{n}})\times \left(1-(1-p_{\textrm{sciss}})^M\right)^{n-1}\times p_{\textrm{swap}}^{n-1}}{M},
\end{align}
where $p_{\textrm{sciss}}$ is the heralding success probability of the quantum scissors in a repeater link and $p_{\textrm{swap}}$ is the entanglement swap success probability associated with the non-Gaussian entangled state projection of Section~\ref{sec: ngswap} (inclusive of its physical implementation) for connecting two repeater links. We identified an operating point consisting of TMSV mean photon number of $0.0719$, quantum scissors gain governed by a power law given by $\kappa=k (\eta^{1/n})^u,$ where $k=0.0557$ and $u=0.6057$, and entanglement swap parameter $q=1/\xi$ in relation to (\ref{bernu1}). Fig.~\ref{real_repeater} shows the rate-loss tradeoff curves corresponding to this new mode of operation of the CV repeater scheme for a degree of multiplexing $M=10^7$ and different number of repeater nodes $n_{\textrm{rep}}=1,3,7$, along with the rate-loss envelope that tangentially meets the said rate curves. The source repetition rate is taken to be $R_{\textrm{rep}}=1\ \textrm{MHz}$. The envelope is found to scale as $\propto\eta^{0.54}$ and beats $R_\textrm{rep} \times C_\textrm{direct}$ at a distance of $~525$ km.

Table~\ref{tab:CV1} lists the rate-loss-envelope scaling exponent $s$, the crossover distance $L_{\textrm{cross}}$ and the corresponding rate $R_{\textrm{cross}}$ for the new mode of operation of  the CV repeater for different degrees of multiplexing $M$. The trends are similar to the DV-like operation. The value of $M$ has to similarly exceed a threshold for the repeater-enhanced rate-loss tradeoff to beat $R_{\textrm{rep}}\times C_{\textrm{direct}}$. The threshold $M$ in the new mode of operation, however, is smaller ($\sim 10^3$) compared to that of the DV-like operation ($\sim 10^9$). The exponent $s$ again drops monotonically with increasing $M$. However, the $M$ required to attain a given $s$ is smaller. For example, whereas the DV-like operation required $M=10^{10}$ to attain $s=0.54$, the new mode of operation attains the same $s$ with $M=10^7$. The $M-$dependence of $L_{\textrm{cross}}$ and $R_{\textrm{cross}}$ is again similarly non-monotonic. The optimal $M$, however, is smaller ($\sim 10^{4}$) compared to that of the DV-like operation ($\sim 10^{12}$), and yields a smaller crossover distance of $L_{\textrm{cross}}=380$ km and higher rate $R_{\textrm{cross}}=3.6\times 10^{-2}$ compared to $L_{\textrm{cross}}=780$ km and $R_{\textrm{cross}}=3.7\times 10^{-10}$ of the DV-like operation.

We now take a closer look at the rate-loss envelopes corresponding to the same scaling exponent $s=0.54$ in the two modes of operation as in Figs.~\ref{repeater_ratecurve} and \ref{real_repeater}. As pointed out earlier, the new mode of operation requires fewer number of multiplexed modes ($M=10^7$) compared to DV-like operation ($10^{10}$) to attain this scaling exponent. The crossover distance $L_{\textrm{cross}}=525$ km is smaller and the corresponding rate $R_{\textrm{cross}}=4.5\times 10^{-5}$ higher compared to the DV-like one, for which $L_{\textrm{cross}}=851$ km and $R_{\textrm{cross}}=1.4\times 10^{-11}$. Also, at any given distance, the new mode of operation results in a higher rate compared to the DV-like operation. For example, at a distance of 1700 km, the former attains a rate $R\sim10^{-17}$~ebps (with 3 repeater nodes), whereas the latter attains $R\sim10^{-20}$~ebps. Together, the above results clearly confirm that the new mode of operation of the mode-multiplexed CV repeater identified in Sec.~\ref{subsec cv} is significantly better than the DV-like mode of operation of Sec.~\ref{subsec dv}, and showcases the true potential of CV.

\section{Discussion and Outlook}\label{sec: disc}
Our results evidently demonstrate that the proposed quantum repeater scheme for CV entanglement distribution \textit{in-principle} works and beats direct transmission. It is important to emphasize that the entanglement source repetition rate of $R_{\textrm{rep}}=1\ \textrm{MHz}$ was chosen as such to ensure that the corresponding requirements on the multimode quantum memories used at the repeater nodes are met under current technologies. For example, the comb preparation and memory read/write times of a multimode quantum memory based on atomic frequency comb generated from rare-earth-ion-doped crystals are typically of the order of microseconds~\cite{YZHL18}. Thus, at the moment direct transmission systems can be operated at much higher repetition rates compared to the proposed CV-repeater scheme, thereby achieving higher entanglement distribution rates in ebps units. However, when faster multimode quantum memories become available in the future, the CV-repeater-enhanced entanglement distribution rates (ebps) reported in this work can be improved commensurately by choosing similarly high source repetition rates, thereby restoring the advantage promised by the repeater scheme.

Regarding future work, since we assumed a pure loss channel model, and ideal single photon sources, photon number resolving detectors, quantum memories and optical switches, the impact of excess thermal noise in the channel and imperfections in these elements on the performance of the scheme remains to be investigated. For example, with regard to the quantum memories, the length of the CV repeater links for the mode of operation identified in Sec.~\ref{subsec cv} is found to be $\sim400\ \textrm{km}$, which necessitates memory storage times of $\sim1.3 \ \textrm{ms}$. Though there is hope to attain longer storage times in the future~\cite{ZHAB15}, the current state-of-the-art storage time for multimode quantum memories remains to be $\sim50\mu \textrm{s}$~\cite{JTLE16}. Beyond this time, the memory would begin to decohere, which has to be taken into account. Within the pure loss channel model, our result identifies a particular operating point for the proposed mode-multiplexed CV repeater scheme in parameter space where the scheme beats direct transmission. The question of what is the optimal performance of the proposed repeater scheme remains open. Another important general question is regarding how the rates supported by CV repeaters compare with those supported by DV repeaters. At the outset the rates achieved by CV repeaters, both in this work and in~\cite{FM18}, seem lower compared to the DV repeater rates reported in the literature. However, upon closer look, when the ideal single photon sources involved in both CV and DV repeaters are replaced by heralded single photon generation from TMSV sources, the normalized entanglement distribution rates per use of a TMSV source is higher for CV compared to DV at large distances~\cite{FM18}. For a careful comparison of CV vs DV repeaters, c.f.~\cite{DMRN19}.

Some possible new directions for future work on CV repeaters include developing CV analogues of all-optical schemes based on cluster state quantum memories~\cite{MKEG17,ATL15}, the use of sources that generate hybrid entanglement such as Bell states in the Gottesman-Kitaev-Preskill~\cite{GKP01}-encoded qubit basis, and considering alternative repeater architectures such as the notion of a third generation, one-way repeater scheme based on quantum error correction, logical Bell state measurements and teleportation~\cite{MLKLLJ16}, which could potentially increase the rates further. Also, CV repeaters for more general network scenarios involving multiple communicating parties largely remains to be explored. Such work might pave the way towards bridging the gap between achievable entanglement distribution rates in repeater networks and the corresponding repeater-assisted end-to-end rate capacities~\cite{Pir19} (see also~\cite{Pir16}).


\begin{acknowledgments}
This work was supported by the Office of Naval Research program Communications and Networking with Quantum Operationally-Secure Technology for Maritime
Deployment (CONQUEST), awarded under Raytheon BBN Technologies prime contract number N00014-16-C-2069, and a subcontract to University
of Arizona. This document does not contain technology or technical data controlled under either the U.S. International Traffic in Arms Regulations or the U.S. Export Administration Regulations. KPS thanks Frederic Grosshans and William Munro for helpful discussions.
\end{acknowledgments}

\appendix
\section{A Lower Bound on the Distillable Entanglement of the Repeater Link State}\label{rci_appen}

In this Appendix, we write down the state heralded across the continuous variable (CV) repeater link in Fig. 1 of the main text for any $N\geq1$ number of quantum scissors and derive its reverse coherent information (RCI)~\cite{DW05,DJKR06,GPLS09,PGBL09}. The RCI is a proven information theoretic lower bound on a state's distillable entanglement in the asymptotic limit of a large number of copies of the state. 

Consider that the two-mode squeezed vacuum (TMSV) state can be expressed
in the Fock basis as
\begin{equation}
\left|\psi\right\rangle _{AR}=\sqrt{{1-\chi^{2}}}\sum_{n=0}^{\infty}\chi^{n}\left|n\right\rangle _{A}\left|n\right\rangle _{R},\label{eq:TMSV Fock}
\end{equation}
where $\chi=\tanh\left(\sinh^{-1}\sqrt{\mu}\right),$ $\mu$ being the mean photon number in each mode. By modeling the pure loss channel of transmissivity $t$ with a beam splitter of the same transmissivity acting on the signal mode $R$ and the environment mode $E$ which is in the vacuum state, we obtain a three-mode output state ($R\rightarrow Y$) of the form
\begin{align}
&\left|\psi\right\rangle _{AYE}/\sqrt{{1-\chi^{2}}}\nonumber\\
& =\sum_{n=0}^{\infty}\chi^{n}\sum_{k=0}^{n}\sqrt{{n \choose k}}x^{k}y^{n-k}\left|n\right\rangle _{A}\left|n-k\right\rangle _{Y}\left|k\right\rangle _{E}\label{eq:AYE state}\\
& =\sum_{k=0}^{\infty}\sum_{n=k}^{\infty}\chi^{n}\sqrt{{n \choose k}}x^{k}y^{n-k}\left|n\right\rangle _{A}\left|n-k\right\rangle _{Y}\left|k\right\rangle _{E}\label{eq:AYE2 state}\\
& =\sum_{k=0}^{\infty}\sum_{m=0}^{\infty}\chi^{n}\sqrt{{m+k \choose k}}x^{k}y^{m}\left|m+k\right\rangle _{A}\left|m\right\rangle _{Y}\left|k\right\rangle _{E},\label{eq:AYE3 state}
\end{align}
where $x=\sqrt{1-t}$ and $y=\sqrt{t}$.

When NLA is successfully applied on the mode $Y$ ($Y\rightarrow B$) using $N-$quantum scissors~(c.f. \cite[Fig. 1]{SKG181} for a schematic of the NLA), the state heralded across Alice, Bob and the environment, and the heralding success probability are given by
\begin{align}
\left|\psi\right\rangle _{ABE} &=\frac{c}{\sqrt{P_{N}}}\sum_{k=0}^{\infty}a^{k}\sum_{m=0}^{N}\frac{N!}{\left(N-m\right)!}\sqrt{{m+k \choose k}}b^{m} \nonumber\\
&\times\left|m+k\right\rangle _{A}\left|m\right\rangle _{B}\left|k\right\rangle _{E},\label{eq:ABE state}\\
P_{N}  &=c^{2}\sum_{k}a^{2k}\sum_{m=0}^{N}\left(\frac{N!}{\left(N-m\right)!}\right)^{2}{m+k \choose k}b^{2m},\label{eq:P_succ_HC}
\end{align}
where $a=\chi\sqrt{1-t},\,b=g\chi\sqrt{t}/N,$ $c=\sqrt{\left(1-\chi^{2}\right)\kappa^{N}},$
$g$ is the NLA gain of the quantum scissors, and $\kappa=1/\left(1+g^{2}\right)$
is an intrinsic parameter in the quantum scissors.

The final two-mode state heralded across the NLA is obtained by tracing
over the loss mode $E$ as $\rho_{AB}^{\left(N\right)}=\sum_{u=0}^{\infty}\rho_{AB}^{\left(N\right)}\left(u\right),$
where 
\begin{equation}
\rho_{AB}^{\left(N\right)}\left(u\right)=\sum_{m=0}^{N}\sum_{m'=0}^{N}\zeta_{m,u}^{\left(N\right)}\zeta_{m',u}^{\left(N\right)}\left|m+u,m\right\rangle \left\langle m'+u,m'\right|_{AB},\label{eq:rho_u}
\end{equation}
and the coefficients $\zeta_{m,u}$ are given by
\begin{align}
\zeta_{m,u}^{\left(N\right)} & =ca^{u}b^{m}\frac{N!}{\left(N-m\right)!}\sqrt{{m+u \choose u}}.
\end{align}
The state $\rho_{AB}^{\left(N\right)}$ is thus
\begin{align}
\rho_{AB}^{\left(N\right)} & =\sum_{u=0}^{\infty}\sum_{i=0}^{N}\left(\zeta_{i,u}^{\left(N\right)}\right)^{2}\left|\Phi_{N}\left(u\right)\right\rangle \left\langle \Phi_{N}\left(u\right)\right|_{AB}\label{eq: diag rho AB}\\
\left|\Phi_{N}\left(u\right)\right\rangle  & _{AB}=\frac{\sum_{i=0}^{N}\left(\frac{\zeta_{i,u}^{\left(N\right)}}{\zeta_{N,u}^{\left(N\right)}}\right)\left|u+i,i\right\rangle _{AB}}{\sqrt{\sum_{i=0}^{N}\left(\frac{\zeta_{i,u}^{\left(N\right)}}{\zeta_{N,u}^{\left(N\right)}}\right)^{2}}},
\end{align}
so that its entropy is given by 
\begin{equation}
H\left(AB\right)=-\sum_{u=0}^{\infty}\left(\sum_{i=0}^{N}\left(\zeta_{i,u}^{\left(N\right)}\right)^{2}\right)\log_{2}\left(\sum_{i=0}^{N}\left(\zeta_{i,u}^{\left(N\right)}\right)^{2}\right).\label{eq: entropy AB}
\end{equation}

The state on system $A$ is obtained by tracing over $B$ as $\rho_{A}^{\left(N\right)}=\operatorname{Tr}_{B}\left(\rho_{AB}^{\left(N\right)}\right)$

\begin{align}
\rho_{A}^{\left(N\right)} & =\sum_{u=0}^{\infty}\sum_{m=0}^{N}\left(\zeta_{m,u}^{\left(N\right)}\right)^{2}\left|m+u\right\rangle \left\langle m+u\right|_{A}\label{eq: rho A}\\
& =\left(\sum_{u=0}^{N}\Gamma_{1}\left(u\right)+\sum_{u=N+1}^{\infty}\Gamma_{2}\left(u\right)\right)\left|u\right\rangle \left\langle u\right|_{A}.\label{eq: rho A1}
\end{align}
where 
\begin{align}
\Gamma_{1}\left(u\right)=\sum_{\begin{array}{c}
	\left\{ i,j\right\} \geq0,\\
	i+j=u
	\end{array}}\left(\zeta_{i,j}^{\left(N\right)}\right)^{2}, & \:\Gamma_{2}\left(u\right)=\sum_{i=0}^{N}\left(\zeta_{i,u-i}^{\left(N\right)}\right)^{2}.
\end{align}
Its entropy is therefore given by
\begin{align}
H\left(A\right) & =-\sum_{u=0}^{N}\Gamma_{1}\left(u\right)\log_{2}\Gamma_{1}\left(u\right)-\sum_{u=N+1}^{\infty}\Gamma_{2}\left(u\right)\log_{2}\Gamma_{2}\left(u\right).\label{eq: Entropy A-1}
\end{align}
Thus, the RCI of the heralded state follows from (\ref{eq: entropy AB})
and (\ref{eq: Entropy A-1}) as
\begin{align}
I_{R}=H\left(A\right)-H\left(AB\right).\label{rci_def}
\end{align}

\section{Non-Gaussian Entanglement Swap\label{Appendix: Non-Gaussian-Entanglement-Swap}}

A beam splitter of transmissivity $T=\cos^{2}\theta$ acting on modes
$i$ and $j$ can be described by the unitary operator:
\begin{equation}
U_{ij}\left(\theta\right)=\exp\left(-\theta\left(\hat{a}_{i}^{\dagger}\hat{a}_{j}-\hat{a}_{j}^{\dagger}\hat{a}_{i}\right)\right),
\end{equation}
that transforms the mode operators as
\begin{align}
\left(\begin{array}{c}
\hat{a}_{i}\\
\hat{a}_{j}
\end{array}\right) & \rightarrow U_{ij}\left(\theta\right)^{\dagger}\left(\begin{array}{c}
\hat{a}_{i}\\
\hat{a}_{j}
\end{array}\right)U_{ij}\left(\theta\right)\\
& =\left(\begin{array}{cc}
\cos\theta & \sin\theta\\
-\sin\theta & \cos\theta
\end{array}\right)\left(\begin{array}{c}
\hat{a}_{i}\\
\hat{a}_{j}
\end{array}\right).\label{eq:BSinputoutput}
\end{align}
It's action on a Fock state input $\left|n\right\rangle _{i}\otimes\left|0\right\rangle _{j}$
is given by
\begin{align}
&U_{ij}\left(\theta\right)\left|n\right\rangle _{i}\left|0\right\rangle _{j}\nonumber\\
&=\sum_{k=0}^{n}\sqrt{{n \choose k}}\left(\cos^{2}\theta\right)^{k/2}\left(\sin^{2}\theta\right)^{\left(n-k\right)/2}\left|k\right\rangle _{i}\left|n-k\right\rangle _{j}.\label{eq:BSFock}
\end{align}

\begin{figure}
	\begin{tabular}{c}
		\includegraphics[scale=0.3]{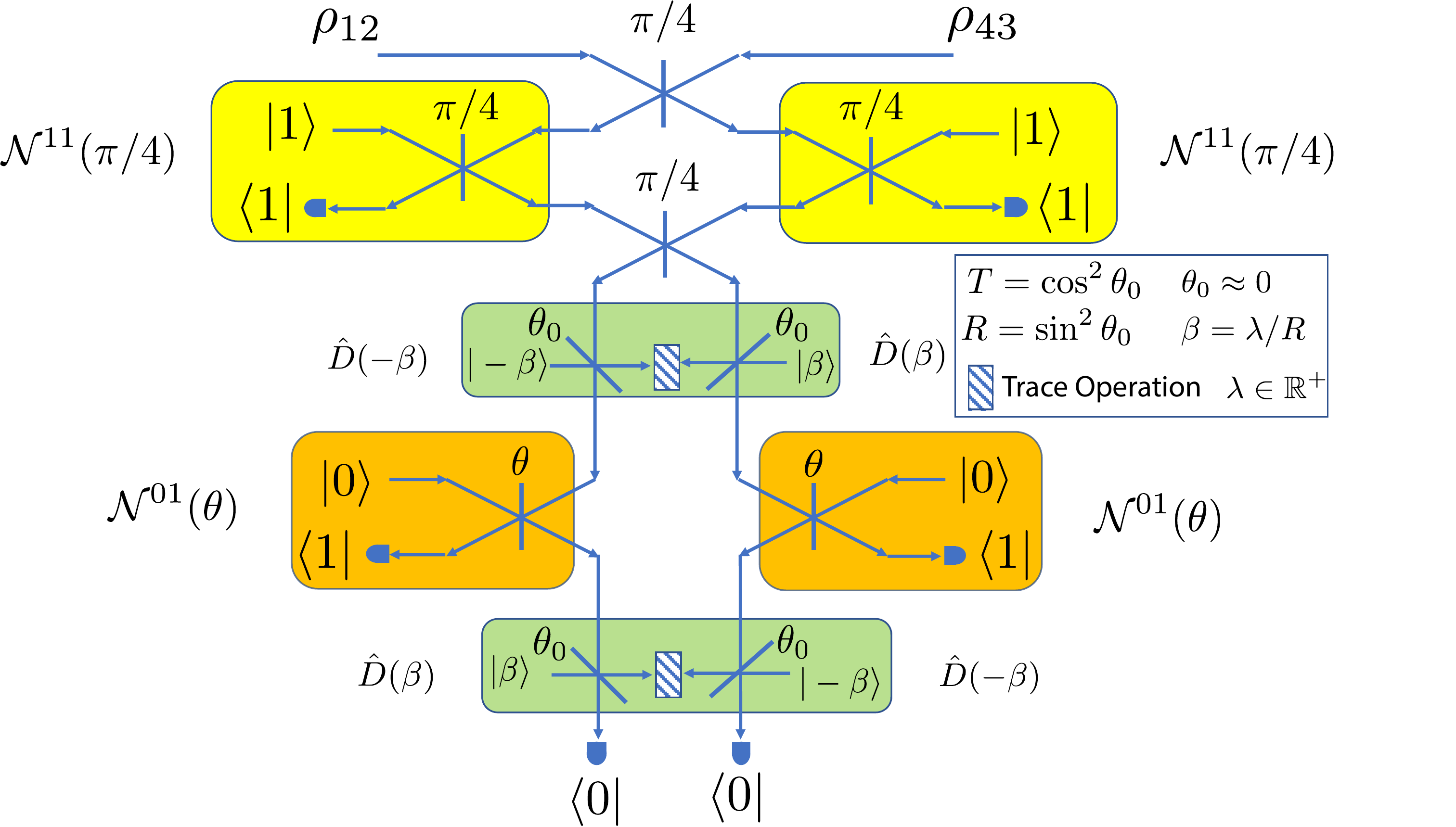}\tabularnewline
	\end{tabular}
	
	\centering
	
	\caption{Non-Gaussian entanglement swap operation based on Fock state filtering, displacement operations, photon subtraction and vacuum projection.
		\label{Munroproj}}
\end{figure}

\begin{proposition}
	\label{prop:N11}Consider the channel composed of mixing an input
	mode $i$ with a mode $j$ in the Fock state $\left|1\right\rangle _{j}$
	on a beam splitter $U_{ij}\left(\theta\right)$, followed by a projective
	measurement $\left\langle 1\right|_{i}$ in mode $i.$ Let us denote
	this non-trace-preserving map as $\mathcal{N}_{i\rightarrow j}^{11}\left(\theta\right)$.
	It can be written as
	\begin{equation}
	\mathcal{N}_{i\rightarrow j}^{11}\left(\theta\right)=\left(-\sin\theta+\cos\theta\frac{d}{d\theta}\right)\left(\sin\theta\right)^{\hat{n}_{ij}},\label{eq:N11channel}
	\end{equation}
	where we use the notation $\hat{n}_{ij}\left|n\right\rangle _{i}=\left|n\right\rangle _{j}$
	to denote input to output transformation.
\end{proposition}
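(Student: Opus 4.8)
The plan is to verify the operator identity (\ref{eq:N11channel}) by evaluating both sides on an arbitrary number state $\ket{n}_i$, which suffices since the Fock states span the input Hilbert space. On the left I must compute $\langle 1|_i\, U_{ij}(\theta)\,\ket{n}_i\ket{1}_j$ and show it is a scalar multiple of $\ket{n}_j$; on the right I use the stated notation $(\sin\theta)^{\hat n_{ij}}\ket{n}_i=(\sin\theta)^n\ket{n}_j$, so the whole task reduces to matching the scalar coefficient of $\ket{n}_j$.

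First I would rewrite the injected photon as a creation operator, $\ket{n}_i\ket{1}_j=\hat a_j^\dagger\,\ket{n}_i\ket{0}_j$, and commute the beam splitter through it via $U_{ij}(\theta)\hat a_j^\dagger=\bigl(U_{ij}(\theta)\hat a_j^\dagger U_{ij}(\theta)^\dagger\bigr)U_{ij}(\theta)$. Using the creation-operator form of the beam splitter, $U_{ij}\hat a_j^\dagger U_{ij}^\dagger=-\sin\theta\,\hat a_i^\dagger+\cos\theta\,\hat a_j^\dagger$, together with the Fock-basis action (\ref{eq:BSFock}) applied to $U_{ij}\ket{n}_i\ket{0}_j$, gives $U_{ij}\ket{n}_i\ket{1}_j$ as the single operator $(-\sin\theta\,\hat a_i^\dagger+\cos\theta\,\hat a_j^\dagger)$ acting on the known expansion $\sum_{k}\sqrt{\binom{n}{k}}(\cos\theta)^k(\sin\theta)^{n-k}\ket{k}_i\ket{n-k}_j$.

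Next I would apply the projector $\langle 1|_i$. Only two terms survive: the $\hat a_i^\dagger$ piece acting on the $k=0$ term lands in $\ket{1}_i$ and contributes $-(\sin\theta)^{n+1}\ket{n}_j$, while the $\hat a_j^\dagger$ piece acting on the $k=1$ term contributes $n\cos^2\theta\,(\sin\theta)^{n-1}\ket{n}_j$. Summing, $\langle 1|_i\,U_{ij}\ket{n}_i\ket{1}_j=\bigl[-(\sin\theta)^{n+1}+n\cos^2\theta\,(\sin\theta)^{n-1}\bigr]\ket{n}_j$. The closing step is the identification $n\cos\theta\,(\sin\theta)^{n-1}=\frac{d}{d\theta}(\sin\theta)^n$, so the bracket equals $\bigl(-\sin\theta+\cos\theta\,\frac{d}{d\theta}\bigr)(\sin\theta)^n$; combined with $(\sin\theta)^{\hat n_{ij}}\ket{n}_i=(\sin\theta)^n\ket{n}_j$ this reproduces (\ref{eq:N11channel}) on every $\ket{n}_i$, hence as an operator identity.

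The main obstacle is not the computation but pinning down the sign convention. The Heisenberg relation (\ref{eq:BSinputoutput}) and the Fock-basis action (\ref{eq:BSFock}) differ in the sign of the cross terms, and feeding the former naively into the calculation introduces a spurious $(-1)^{n-1}$ into the amplitude. I would therefore work throughout with the creation-operator transformation that reproduces the manifestly positive coefficients of (\ref{eq:BSFock}); once that choice is fixed, the only real content is tracking which two terms survive the projection and recognizing the derivative, both of which are routine.
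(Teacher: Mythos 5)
Your proposal is correct and follows essentially the same route as the paper: write $\ket{n}_i\ket{1}_j=\hat a_j^\dagger\ket{n}_i\ket{0}_j$, push $U_{ij}(\theta)$ through the creation operator via $U_{ij}\hat a_j^\dagger U_{ij}^\dagger=-\sin\theta\,\hat a_i^\dagger+\cos\theta\,\hat a_j^\dagger$, apply the Fock-basis expansion, project onto $\bra{1}_i$, and recognize the surviving coefficient $-\sin^{n+1}\theta+n\cos^2\theta\sin^{n-1}\theta$ as $\bigl(-\sin\theta+\cos\theta\,\tfrac{d}{d\theta}\bigr)\sin^n\theta$. The only cosmetic difference is that the paper implements the projection by writing $\bra{1}_i=\bra{0}_i\hat a_i$ and normal-ordering, while you read off the two surviving terms of the expansion directly; your caution about the sign convention is well placed but does not change the argument.
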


\begin{proof}
	We have
	\begin{align}
	U_{ij}\left(\theta\right)\left|n\right\rangle _{i}\left|1\right\rangle _{j} & =U_{ij}\left(\theta\right)\hat{a}_{j}^{\dagger}\left|n\right\rangle _{i}\left|0\right\rangle _{j}\\
	& =\left(-\sin\theta\hat{a}_{i}^{\dagger}+\cos\theta\hat{a}_{j}^{\dagger}\right)U_{ij}\left(\theta\right)\left|n\right\rangle _{i}\left|0\right\rangle _{j}
	\end{align}
	This implies
	\begin{widetext}
	\begin{align}
	\left\langle 1\right|_{i}U_{ij}\left(\theta\right)\left|n\right\rangle _{i}\left|1\right\rangle _{j} &=\left\langle 0\right|_{i}\hat{a}_{i}\left(-\sin\theta\hat{a}_{i}^{\dagger}+\cos\theta\hat{a}_{j}^{\dagger}\right)U_{ij}\left(\theta\right)\left|n\right\rangle _{i}\left|0\right\rangle _{j}\\
	& =-\sin\theta\left\langle 0\right|_{i}\hat{a}_{i}\hat{a}_{i}^{\dagger}U_{ij}\left(\theta\right)\left|n\right\rangle _{i}\left|0\right\rangle _{j}+\cos\theta\left\langle 0\right|_{i}\hat{a}_{i}\hat{a}_{j}^{\dagger}U_{ij}\left(\theta\right)\left|n\right\rangle _{i}\left|0\right\rangle _{j}\\
	& =-\sin\theta\left\langle 0\right|_{i}\left(1+\hat{a}_{i}^{\dagger}\hat{a}_{i}\right)U_{ij}\left(\theta\right)\left|n\right\rangle _{i}\left|0\right\rangle _{j}+\cos\theta\left\langle 0\right|_{i}\hat{a}_{i}\hat{a}_{j}^{\dagger}U_{ij}\left(\theta\right)\left|n\right\rangle _{i}\left|0\right\rangle _{j}\\
	& =-\sin\theta\left\langle 0\right|_{i}U_{ij}\left(\theta\right)\left|n\right\rangle _{i}\left|0\right\rangle _{j}+\cos\theta\left\langle 0\right|_{i}\hat{a}_{i}\hat{a}_{j}^{\dagger}U_{ij}\left(\theta\right)\left|n\right\rangle _{i}\left|0\right\rangle _{j}
	\end{align}
	\begin{align}
	& =-\sin^{n+1}\theta\left|n\right\rangle _{j}+n\cos^{2}\theta\sin^{n-1}\theta\left|n\right\rangle _{j}\\
	& =\left(-\sin\theta+\cos\theta\frac{d}{d\theta}\right)\sin^{n}\theta\left|n\right\rangle _{j},
	\end{align}
	\end{widetext}
	which implies, for a general state $\left|\psi\right\rangle _{i}=\sum_{n}c_{n}\left|n\right\rangle _{i},$
	the action of the channel is as given in (\ref{eq:N11channel}).
\end{proof}

\begin{proposition}
	\label{prop:N01}Consider the channel composed of mixing an input
	mode $i$ with a mode $j$ in the Fock state $\left|0\right\rangle _{j}$
	on a beam splitter $U_{ij}\left(\theta\right)$, followed by a projective
	measurement $\left\langle 1\right|_{i}$ in the mode $i.$ Let us
	denote this non-trace-preserving map as $\mathcal{N}_{i\rightarrow j}^{01}\left(\theta\right)$.
	It can be written as
	\begin{equation}
	\mathcal{N}_{i\rightarrow j}^{01}\left(\theta\right)=\cos\theta\left(\sin\theta\right)^{\hat{n}_{ij}}\hat{a}_{j}\label{eq:N01channel}
	\end{equation}
	where we use the notation $\hat{n}_{ij}\left|n\right\rangle _{i}=\left|n\right\rangle _{j}$
	to denote the input to output transformation.
\end{proposition}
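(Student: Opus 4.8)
The plan is to proceed exactly as in the proof of Proposition~\ref{prop:N11}, but to exploit the fact that here the ancilla mode $j$ starts in the vacuum $|0\rangle_j$, which lets me read the action off directly from the Fock-basis beam-splitter formula~(\ref{eq:BSFock}) \emph{without} first rewriting the input via a creation operator as was necessary for the $|1\rangle_j$ ancilla in Proposition~\ref{prop:N11}. Since the map is linear, it suffices to verify the claimed operator identity on each Fock input $|n\rangle_i$ and then extend to a general $|\psi\rangle_i=\sum_n c_n|n\rangle_i$.

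First I would apply $U_{ij}(\theta)$ to $|n\rangle_i|0\rangle_j$ using~(\ref{eq:BSFock}), which expands the output as a superposition over $|k\rangle_i|n-k\rangle_j$ with coefficients $\sqrt{\binom{n}{k}}(\cos^2\theta)^{k/2}(\sin^2\theta)^{(n-k)/2}$. Projecting with $\langle 1|_i$ then isolates the single $k=1$ term, yielding $\sqrt{n}\,\cos\theta\,(\sin\theta)^{n-1}|n-1\rangle_j$, where I take $\theta\in[0,\pi/2]$ so that $\cos\theta,\sin\theta\geq 0$ and the square roots simplify unambiguously.

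It then remains to recognize this as the right-hand side of~(\ref{eq:N01channel}) acting on $|n\rangle_i$. Reading the operator $\cos\theta\,(\sin\theta)^{\hat{n}_{ij}}\hat{a}_j$ from right to left, $\hat{a}_j$ first lowers the photon number to produce $\sqrt{n}|n-1\rangle$; the relabeling operator $(\sin\theta)^{\hat{n}_{ij}}$ then sends this to $\sqrt{n}\,(\sin\theta)^{n-1}|n-1\rangle_j$ in the output mode $j$ via the convention $\hat{n}_{ij}|m\rangle_i=|m\rangle_j$; and the prefactor $\cos\theta$ completes the match. Comparing with the projected beam-splitter amplitude establishes the identity on Fock inputs, and summing against the coefficients $c_n$ gives the general statement.

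I do not expect a genuine obstacle here, since the computation reduces to a one-line projection onto $\langle 1|_i$. The only point requiring care is bookkeeping: I must keep $\hat{a}_j$ ordered to the right of $(\sin\theta)^{\hat{n}_{ij}}$, so that the lowering occurs \emph{before} the $\theta$-dependent weighting and mode relabeling, and I must fix the branch $\cos\theta,\sin\theta\geq 0$ consistently with the parametrization $T=\cos^2\theta$ used throughout. These are exactly the conventions already in force in Proposition~\ref{prop:N11}, so no new ingredient is needed.
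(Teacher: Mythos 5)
Your proposal is correct and follows essentially the same route as the paper's own proof: both read the amplitude $\sqrt{n}\,\cos\theta\,(\sin\theta)^{n-1}\left|n-1\right\rangle _{j}$ directly off (\ref{eq:BSFock}) after projecting onto $\left\langle 1\right|_{i}$, identify it with $\cos\theta\left(\sin\theta\right)^{\hat{n}_{ij}}\hat{a}_{j}$ acting on $\left|n\right\rangle$, and extend by linearity. No gap.
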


\begin{proof}
	From (\ref{eq:BSFock}), we have
	\begin{align}
	\left\langle 1\right|_{i}U_{ij}\left(\theta\right)\left|n\right\rangle _{i}\left|0\right\rangle _{j} & =\sqrt{n}\cos\theta\sin^{n-1}\theta\left|n-1\right\rangle _{j}\\
	& =\hat{a}_{j}\cos\theta\sin^{n-1}\theta\left|n\right\rangle _{j}\\
	& =\cos\theta\left(\sin\theta\right)^{\hat{n}_{ij}}\hat{a}_{j}\left|n\right\rangle _{j}
	\end{align}
	which implies, for a general state $\left|\psi\right\rangle _{i}=\sum_{n}c_{n}\left|n\right\rangle _{i},$
	the action of the channel is as given in (\ref{eq:N01channel}).
\end{proof}
\begin{remark}
	The displacement operation $\hat{D}_{i}\left(\lambda\right)$ on a
	mode $i$ can be implemented using a beam splitter $U_{ij}\left(\theta\right)$
	with the mode $j$ in the coherent state $\left|\lambda/\sin^{2}\theta\right\rangle _{j}.$ 
\end{remark}

\begin{proposition}
	\label{prop:MunroProj}Consider the measurement depicted in Fig.~\ref{Munroproj}.
	The projection implemented by the measurement scheme on modes $2$
	and $3$ is given by $\hat{F}_{23}^\dagger$, where
	\begin{align}
	\hat{F}_{23}=-\cos^{2}\theta\left(\sin^{2}\theta\right)^{\lambda^{2}}\left(\frac{\lambda^{2}}{2}\left|0\right\rangle _{2}\left|0\right\rangle _{3}+\frac{1}{4}\left|1\right\rangle _{2}\left|1\right\rangle _{3}\right),
	\end{align}
	where $\theta$ is related to the transmissivity of the photon subtraction
	beam splitters and $\lambda$ is the amplitude of the displacement
	stages.
\end{proposition}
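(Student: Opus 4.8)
The plan is to realize the measurement circuit of Fig.~\ref{Munroproj} as a composition of the elementary heralded maps already characterized above, and to push the two-mode input through it stage by stage in the Fock basis. Concretely, I would first read off from the figure the ordered list of operations acting on modes $2$ and $3$ --- Fock-state filtering, displacement, photon subtraction, and the final vacuum projection --- and assign to each its operator: the photon-subtraction stages are instances of $\mathcal{N}_{i\rightarrow j}^{01}(\theta)=\cos\theta(\sin\theta)^{\hat{n}_{ij}}\hat{a}_j$ from Proposition~\ref{prop:N01}, the displacement stages are the beam-splitter-with-coherent-ancilla of the Remark with ancilla amplitude $\lambda/\sin^2\theta$, the Fock filter is built from $\mathcal{N}^{11}$ of Proposition~\ref{prop:N11}, and the vacuum projections are simply $\langle 0|$. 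Because the Fock filter restricts each mode to $\mathrm{span}\{|0\rangle,|1\rangle\}$, it suffices to evaluate the resulting composite operator on the four basis kets $|00\rangle,|01\rangle,|10\rangle,|11\rangle$ and read off its matrix elements.

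The concrete steps follow. First, I would compose the single-mode operators analytically, using $\hat{a}\hat{D}(\lambda)=\hat{D}(\lambda)(\hat{a}+\lambda)$ together with $\langle 0|(\sin\theta)^{\hat{n}}=\langle 0|$ to collapse each branch. Second --- and this is where the two modes become entangled --- I would track the stage that couples modes $2$ and $3$ (a beam splitter between them, or a shared heralding detector), so that the odd-total-photon components $|01\rangle$ and $|10\rangle$ drop out while the even components survive, producing the $|00\rangle$--$|11\rangle$ structure with no cross terms. Third, I would evaluate the coherent-state overlaps generated when the filters $(\sin\theta)^{\hat{n}}$ act on the coherent ancillas of the displacement stages; this is precisely what converts the naive Gaussian weight into the exact factor $(\sin^2\theta)^{\lambda^2}$. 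Finally, collecting the surviving amplitudes should give $\tfrac{\lambda^2}{2}$ on $|00\rangle$ and $\tfrac{1}{4}$ on $|11\rangle$, with the overall $-\cos^2\theta(\sin^2\theta)^{\lambda^2}$ arising from the two $\cos\theta$ subtraction prefactors and the sign carried by the $-\sin\theta$ term of Proposition~\ref{prop:N11}. Physically, the relative weight $\lambda^2/2$ versus $1/4$ reflects whether the two subtracted photons are drawn from the displacement ancillas (contributing a factor $\lambda^2$) or from the input qubits (contributing a factor independent of $\lambda$).

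The main obstacle I expect is the exact prefactor rather than the operator's support. Getting the support right ($|00\rangle$ and $|11\rangle$ only) follows from photon-number bookkeeping and the interference argument, but reproducing $-\cos^2\theta(\sin^2\theta)^{\lambda^2}$ and the correct relative weight requires implementing each displacement and subtraction as a \emph{finite}-transmissivity beam splitter with its coherent ancilla, so that the filters $(\sin\theta)^{\hat{n}}$ rescale those ancillas and generate the power-law factor $(\sin^2\theta)^{\lambda^2}$ instead of the exponential coherent-overlap weight $e^{-|\lambda|^2}$ that treating the displacements as ideal would produce. Carefully ordering the displacement, subtraction, and vacuum-projection stages on the same auxiliary modes, and keeping the $\lambda/\sin^2\theta$ normalization of the Remark throughout, is the delicate part; once the overlaps are evaluated to the needed order within the qubit sector, the claimed form of $\hat{F}_{23}$, and hence of the projector $\hat{F}_{23}^\dagger$, should follow.
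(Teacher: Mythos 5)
Your overall strategy --- reading the circuit of Fig.~\ref{Munroproj} as a composition of the maps $\mathcal{N}^{11}$, $\mathcal{N}^{01}$, the displacements and the vacuum projections, and then evaluating the composite in the Fock basis --- is exactly the route the paper takes, and your identification of where the $\left(\sin^{2}\theta\right)^{\lambda^{2}}$ factor and the overall sign come from is sound. The genuine gap is the step where you assert that ``the Fock filter restricts each mode to $\mathrm{span}\{\ket{0},\ket{1}\}$'' and conclude that it suffices to compute the four matrix elements on $\ket{00},\ket{01},\ket{10},\ket{11}$. That premise is false: the filter used here is $\mathcal{N}^{11}\left(\pi/4\right)=\left(\hat{n}-1\right)/(\sqrt{2})^{\hat{n}+1}$, which annihilates only $\ket{1}$ and transmits $\ket{0}$ and every $\ket{n}$ with $n\geq2$ (with decaying weights), and it moreover acts in the basis rotated by $U_{23}\left(\pi/4\right)$. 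Since $\hat{F}_{23}^{\dagger}$ is a rank-one functional on the full two-mode Fock space, proving the proposition requires showing that it vanishes on every $\ket{mn}$ outside $\mathrm{span}\{\ket{00},\ket{11}\}$, and your plan as written supplies no argument for that.

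The paper closes this by working in the adjoint (ket) picture: the vacuum projections composed with the displaced photon subtractions collapse to the fixed bra $\cos^{2}\theta\left(\sin^{2}\theta\right)^{\lambda^{2}}\left(\bra{1}_{3}+\lambda\bra{0}_{3}\right)\left(\bra{1}_{2}-\lambda\bra{0}_{2}\right)$, so $\hat{F}_{23}$ is obtained by applying the photon-number-preserving rotated filters $U_{23}^{\dagger}\left(\pi/4\right)\left(\hat{n}_{k}-1\right)U_{23}\left(\pi/4\right)=\tfrac{1}{2}\left(\hat{a}_{2}^{\dagger}\pm\hat{a}_{3}^{\dagger}\right)\left(\hat{a}_{2}\pm\hat{a}_{3}\right)-1$ and the weight $(\sqrt{2})^{-(\hat{n}_{2}+\hat{n}_{3}+2)}$ to the ket $\left(\ket{1}_{2}-\lambda\ket{0}_{2}\right)\left(\ket{1}_{3}+\lambda\ket{0}_{3}\right)$, which manifestly confines the support to total photon number at most two. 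The survival of only $\ket{00}$ and $\ket{11}$ is then not a parity or ``shared detector'' effect as you suggest, but a consequence of the opposite displacement signs on the two arms: the one-photon part is exactly the antisymmetric state $\ket{10}-\ket{01}$, which is annihilated by $\tfrac{1}{2}\left(\hat{a}_{2}^{\dagger}-\hat{a}_{3}^{\dagger}\right)\left(\hat{a}_{2}-\hat{a}_{3}\right)-1$, while the symmetric two-photon component $\ket{20}+\ket{02}$ is converted into $\ket{11}$ by the second filter. If you redo your calculation in this direction, keeping the finite-$\theta$ bookkeeping you already flagged for the displacement and subtraction stages, the coefficients $\lambda^{2}/2$ and $1/4$ follow.
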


\begin{proof}
	The measurement scheme depicted in Fig. 1 implements
	the projection $\hat{F}_{23}^\dagger$, which is
	\begin{align}
	&\left\langle 0,0\right|_{23}\hat{D}_{3}\left(-\lambda\right)\mathcal{N}_{3\rightarrow3''}^{01}\left(\theta\right)_{\theta\rightarrow0}\hat{D}_{3}\left(\lambda\right)\hat{D}_{2}\left(\lambda\right)\nonumber\\
	&\mathcal{N}_{2\rightarrow2''}^{01}\left(\theta\right)_{\theta\rightarrow0}\hat{D}_{2}\left(-\lambda\right)U_{23}^{\dagger}\left(\pi/4\right)\nonumber\\
	&\mathcal{N}_{2\rightarrow2'}^{11}\left(\pi/4\right)\mathcal{N}_{3\rightarrow3'}^{11}\left(\pi/4\right)U_{23}\left(\pi/4\right),\label{eq:Munroproj}
	\end{align}
	where for brevity of notation the primes on the output modes of individual
	elements in the transformation are suppressed ahead of the subsequent
	elements, and $\lambda\in\mathbf{R}^{+}$.
	
	From, (\ref{eq:N01channel}), we have $\left\langle 0\right|_{2}\hat{D}_{2}\left(\lambda\right)\left(\mathcal{N}_{2\rightarrow2''}^{01}\left(\theta\right)\right)\hat{D}_{2}\left(-\lambda\right)$
	\begin{align}
	&=\left\langle 0\right|_{2}\hat{D}_{2}\left(\lambda\right)\cos\theta\left(\sin\theta\right)^{\hat{n}_{2}}\hat{a}_{2}\hat{D}_{2}\left(-\lambda\right)\\
	& =\cos\theta\left(\sin\theta\right)^{\lambda^{2}}\left\langle -\lambda\right|_{2}\hat{a}_{2}\hat{D}_{2}\left(-\lambda\right)\\
	& =\cos\theta\left(\sin\theta\right)^{\lambda^{2}}\left\langle 0\right|_{2}\hat{D}_{2}\left(\lambda\right)\hat{a}_{2}\hat{D}_{2}\left(-\lambda\right)\\
	& =\cos\theta\left(\sin\theta\right)^{\lambda^{2}}\left\langle 0\right|_{2}\left(\hat{a}_{2}-\lambda\right)\\
	& =\cos\theta\left(\sin\theta\right)^{\lambda^{2}}\left(\left\langle 1\right|_{2}-\lambda\left\langle 0\right|_{2}\right).
	\end{align}
	Likewise,
	\begin{align}
	&\left\langle 0\right|_{3}\hat{D}_{3}\left(-\lambda\right)\left(\mathcal{N}_{3\rightarrow3''}^{01}\left(\theta\right)\right)\hat{D}_{3}\left(\lambda\right)\nonumber\\
	&=\cos\theta\left(\sin\theta\right)^{\lambda^{2}}\left(\left\langle 1\right|_{3}+\lambda\left\langle 0\right|_{3}\right).
	\end{align}
	
	From (\ref{eq:N11channel}), 
	
	\begin{align}
	\mathcal{N}^{11}\left(\pi/4\right)\left|n\right\rangle  & =-\left(\frac{1}{\sqrt{2}}\right)^{n+1}+n\left(\frac{1}{\sqrt{2}}\right)^{2}\left(\frac{1}{\sqrt{2}}\right)^{n-1}\left|n\right\rangle \\
	& =\frac{\left(n-1\right)}{\left(\sqrt{2}\right)^{n+1}}\left|n\right\rangle .
	\end{align}
	
	Therefore, we have 
	\[
	\mathcal{N}^{11}\left(\pi/4\right)=\frac{\left(\hat{n}-1\right)}{\left(\sqrt{2}\right)^{\hat{n}+1}}.
	\]
	
	Thus, $\hat{F}_{23}^\dagger$ in (\ref{eq:Munroproj}) can be written as
	\begin{widetext}
	\begin{align}
	\hat{F}_{23}^{\dagger} & =\cos^{2}\theta\left(\sin^{2}\theta\right)^{\lambda^{2}}\left(\left\langle 1\right|_{3}+\lambda\left\langle 0\right|_{3}\right)\left(\left\langle 1\right|_{2}-\lambda\left\langle 0\right|_{2}\right)U_{23}^{\dagger}\left(\pi/4\right)\frac{\left(\hat{n}_{2}-1\right)}{\left(\sqrt{2}\right)^{\hat{n}_{2}+1}}\frac{\left(\hat{n}_{3}-1\right)}{\left(\sqrt{2}\right)^{\hat{n}_{3}+1}}U_{23}\left(\pi/4\right)\\
	& =\cos^{2}\theta\left(\sin^{2}\theta\right)^{\lambda^{2}}\left(\left\langle 1\right|_{3}+\lambda\left\langle 0\right|_{3}\right)\left(\left\langle 1\right|_{2}-\lambda\left\langle 0\right|_{2}\right)U_{23}^{\dagger}\left(\pi/4\right)\left(\hat{n}_{2}-1\right)\frac{1}{\left(\sqrt{2}\right)^{\hat{n}_{2}+\hat{n}_{3}+2}}\left(\hat{n}_{3}-1\right)U_{23}\left(\pi/4\right)\\
	& =\cos^{2}\theta\left(\sin^{2}\theta\right)^{\lambda^{2}}\left(\left\langle 1\right|_{3}+\lambda\left\langle 0\right|_{3}\right)\left(\left\langle 1\right|_{2}-\lambda\left\langle 0\right|_{2}\right)\nonumber\\
	& \times U_{23}^{\dagger}\left(\pi/4\right)\left(\hat{n}_{2}-1\right)U_{23}\left(\pi/4\right)U_{23}^{\dagger}\left(\pi/4\right)\frac{1}{\left(\sqrt{2}\right)^{\hat{n}_{2}+\hat{n}_{3}+2}}U_{23}\left(\pi/4\right)U_{23}^{\dagger}\left(\pi/4\right)\left(\hat{n}_{3}-1\right)U_{23}\left(\pi/4\right).
	\end{align}
	\end{widetext}
	From (\ref{eq:BSinputoutput}), we have 
	\begin{align}
	U_{23}^{\dagger}\left(\pi/4\right)\left(\hat{n}_{3}-1\right)U_{23}\left(\pi/4\right) & =\frac{\left(\hat{a}_{2}^{\dagger}-\hat{a}_{3}^{\dagger}\right)\left(\hat{a}_{2}-\hat{a}_{3}\right)}{2}-1\\
	U_{23}^{\dagger}\left(\pi/4\right)\left(\hat{n}_{2}-1\right)U_{23}\left(\pi/4\right) & =\frac{\left(\hat{a}_{2}^{\dagger}+\hat{a}_{3}^{\dagger}\right)\left(\hat{a}_{2}+\hat{a}_{3}\right)}{2}-1.
	\end{align}
	Thus, we have
	\begin{align}
	\hat{F}_{23}^{\dagger}&=\cos^{2}\theta\left(\sin^{2}\theta\right)^{\lambda^{2}}\left(\left\langle 1\right|_{3}+\lambda\left\langle 0\right|_{3}\right)\left(\left\langle 1\right|_{2}-\lambda\left\langle 0\right|_{2}\right)\nonumber\\
	&\times\left(\frac{\left(\hat{a}_{2}^{\dagger}+\hat{a}_{3}^{\dagger}\right)\left(\hat{a}_{2}+\hat{a}_{3}\right)}{2}-1\right)\nonumber\\
	&\times\frac{1}{\left(\sqrt{2}\right)^{\hat{n}_{3}+\hat{n}_{2}+2}}\left(\frac{\left(\hat{a}_{2}^{\dagger}-\hat{a}_{3}^{\dagger}\right)\left(\hat{a}_{2}-\hat{a}_{3}\right)}{2}-1\right).
	\end{align}
	The projector can be written in the ket form as $\hat{F}_{23}$
	\begin{widetext}
	\begin{align}
	& =\cos^{2}\theta\left(\sin^{2}\theta\right)^{\lambda^{2}}\left(\frac{\left(\hat{a}_{2}^{\dagger}+\hat{a}_{3}^{\dagger}\right)\left(\hat{a}_{2}+\hat{a}_{3}\right)}{2}-1\right)\frac{1}{\left(\sqrt{2}\right)^{\hat{n}_{3}+\hat{n}_{2}+2}}\left(\frac{\left(\hat{a}_{2}^{\dagger}-\hat{a}_{3}^{\dagger}\right)\left(\hat{a}_{2}-\hat{a}_{3}\right)}{2}-1\right)\left(\left|1\right\rangle _{2}-\lambda\left|0\right\rangle _{2}\right)\left(\left|1\right\rangle _{3}+\lambda\left|0\right\rangle _{3}\right)\nonumber\\
	& =\cos^{2}\theta\left(\sin^{2}\theta\right)^{\lambda^{2}}\left(\frac{\left(\hat{a}_{2}^{\dagger}+\hat{a}_{3}^{\dagger}\right)\left(\hat{a}_{2}+\hat{a}_{3}\right)-2}{8}\right)\left(-\frac{1}{\sqrt{2}}\left(\left|2\right\rangle _{2}\left|0\right\rangle _{3}+\left|0\right\rangle _{2}\left|2\right\rangle _{3}\right)+2\lambda^{2}\left|0\right\rangle _{2}\left|0\right\rangle _{3}\right)\nonumber\\
	& =-\cos^{2}\theta\left(\sin^{2}\theta\right)^{\lambda^{2}}\left(\frac{\lambda^{2}}{2}\left|0\right\rangle _{2}\left|0\right\rangle _{3}+\frac{1}{4}\left|1\right\rangle _{2}\left|1\right\rangle _{3}\right).
	\end{align}
	
	\end{widetext}
	
\end{proof}
%

\begin{remark}
	$\hat{F}_{23}$ can be expressed in a weighted
	normalized form as
	\[
	\left(\frac{-\cos^{2}\theta\left(\sin^{2}\theta\right)^{\lambda^{2}}\sqrt{1+4\lambda^{4}}}{4}\right)\left(\frac{2\lambda^{2}\left|0\right\rangle _{2}\left|0\right\rangle _{3}+\left|1\right\rangle \left|1\right\rangle _{3}}{\sqrt{1+4\lambda^{4}}}\right).
	\]
\end{remark}

\begin{remark}
	Thus, states
	\begin{align*}
	\left|\psi\right\rangle _{12} & =\frac{\left|0\right\rangle _{1}\left|0\right\rangle _{2}+\xi\left|1\right\rangle _{1}\left|1\right\rangle _{2}}{\sqrt{1+\xi^{2}}},\\
	\left|\psi\right\rangle _{43} & =\frac{\left|0\right\rangle _{4}\left|0\right\rangle _{3}+\xi\left|1\right\rangle _{4}\left|1\right\rangle _{3}}{\sqrt{1+\xi^{2}}},
	\end{align*}
	can be entanglement swapped into a state $\left|\psi\right\rangle _{14}=\left(\left|0\right\rangle _{1}\left|0\right\rangle _{4}+\xi\left|1\right\rangle _{1}\left|1\right\rangle _{4}\right)/\sqrt{1+\xi^{2}},$
	using $\hat{F}_{23}$ of Prop.~\ref{prop:MunroProj} with $\lambda=\sqrt{\xi/2}.$
	The success probability of physically implementing the projection $\hat{F}_{23}$ on these states is given by
	\begin{align}
	P_{\textrm{phys}}=\left(\frac{\xi^{2}}{1+\xi^{2}}\right)\frac{\cos^{4}\theta\left(\sin^{2}\theta\right)^{\xi}}{16},\label{pphys}
	\end{align}
	which can be numerically optimized over the reflectivity parameter $\theta$.
\end{remark}

\section{Iterative formula for a chain of repeater links connected by the non-Gaussian entanglement swap \label{iterform_app}}
Consider the proposed CV repeater link with a single quantum scissors. As an alternative to the Fock basis description of Appendix \ref{rci_appen}), the state successfully heralded across the repeater link can be expressed as~\cite{BASRL14}
\begin{align}
\ket{\psi^{(1)}}_{A_1B_1L_1} &=\frac{1}{\gamma^{(1)}} \left(\gamma_0^{(1)} + \gamma_1^{(1)}a_1^\dag + \gamma_2^{(1)}b_1^\dag + \gamma_3^{(1)}a_1^\dag b_1^\dag\right)\nonumber\\
&\times\sigma_{A_1L_1}^{\rho}\ket{0}_{A_1B_1L_1}\,,
\end{align}
where
\begin{align}
&\gamma^{(1)}_0=f,\\
&\gamma^{(1)}_1=0,\\
&\gamma^{(1)}_2=0,\\
&\gamma^{(1)}_3=\kappa_h f\\
&f=\frac{\sqrt{\kappa}\sech r}{\sqrt{\sech^2r+t \tanh^2r}},\\
&\gamma^{(1)}=\sqrt{\frac{\kappa \sech^2r+t\tanh^2r}{\cosh^2 r(\sech^2r+t \tanh^2r)^2}},\\
&\kappa_h=\sqrt{\frac{1-\kappa}{\kappa}}\sqrt{t}\tanh r,\\
&\tanh \rho=\sqrt{1-t}\tanh r,
\end{align}
$\sigma_{AL}^{\rho}$ is the two-mode squeezing operator corresponding to squeezing of magnitude $\rho$ in modes $A_1,\ L_1$, and $r=\left(\sinh^{-1}\sqrt{\mu}\right)$. The heralding probability is given by
\be
P_{\textrm{sciss}}={\gamma^{(1)}}^2.
\ee

The state obtained by connecting two such repeater links using a Bell swap projection $\hat{\Pi}_{B_1A_2}=|\phi\rangle\langle\phi|_{B_1A_2}$, where
\be
\ket{\phi}_{B_1A_2}=\frac{1}{\sqrt{1+q^2}}(\ket{00}_{B_1A_2}+q\ket{11}_{B_1A_2}), \ q\in\mathbbm{R}^+\,,
\ee 
is given by
\begin{align}
&\bra{\phi}_{B_1A_2}\ket{\psi^{(1)}}_{A_1B_1L_1}\otimes\ket{\psi^{(1)}}_{A_2B_2L_2}\nonumber\\
&=\left(\frac{f}{\gamma^{(1)}}\right)^2\bra{\phi}_{B_1A_2}(1+\kappa a_1^\dag b_1^\dag)(1+\kappa a_2^\dag b_2^\dag)\nonumber\\
&\times\sigma^{\rho}_{A_1L_1}\sigma^{\rho}_{A_2L_2}\ket{0}_{A_1B_1L_1A_2B_2L_2} \label{unnormEL}.
\end{align}
As a result, the state of the modes $A_1, \ B_2, \ L_1$ (with the mode $L_2$ being traced over) is heralded as
\begin{align}
\ket{\psi^{(2)}}_{A_1B_2L_1} &=\frac{1}{\gamma^{(2)}}  \left(\gamma_0^{(2)} + \gamma_1^{(2)}a_1^\dag + \gamma_2^{(2)}b_2^\dag + \gamma_3^{(2)}a_1^\dag b_2^\dag\right)\nonumber\\
&\times\sigma_{A_1L_1}^{\rho}\ket{0}_{A_1B_2L_1}\,,
\end{align}
where
\begin{align}
&\gamma^{(2)}_0=1,\\
&\gamma^{(2)}_1=\kappa_h q \tanh \rho\\
&\gamma^{(2)}_2=0\\
&\gamma^{(2)}_3=\kappa_h^2 q\\
&\gamma^{(2)}=\sqrt{1+\kappa_h^2 q^2 \sinh^2 \rho+\kappa_h^4 q^2 \cosh^2 \rho}.
\end{align}
The corresponding success probability is given by the product of the ideal Bell swap projection probability for the repeater link states $P_{\hat{\Pi}}$ times the probability of physically implementing the projection using linear optics $P_{\textrm{phys}}$ of (\ref{pphys}), i.e.,
\begin{align}
P_{\textrm{swap}}=P_{\hat{\Pi}}\times P_{\textrm{phys}},
\end{align}
where the former is the norm of the unnormalized state in~(\ref{unnormEL}), given by
\begin{align}
P_{\hat{\Pi}}=\frac{f^4\sech^2\rho}{(1+q^2){\gamma^{(1)}}^4}{\gamma^{(2)}}^2
\end{align}

Now, say we want to concatenate two such states $\ket{\psi^{(2)}}_{ABL}$ (connected by the non-Gaussian Bell state projection), to obtain the state $\ket{\psi^{(3)}}_{ABL}$ across four repeater links, or similarly concatenate two states $\ket{\psi^{(3)}}_{ABL}$ to obtain the state across eight repeater links $\ket{\psi^{(4)}}_{ABL}$. More generally, assume that at the $i^{th}$ step of concatenation, we have two states whose tensor product is
\begin{align}
&\ket{\psi^{(i)}}_{A_1B_1L_1}\otimes \ket{\psi^{(i)}}_{A_2B_2L_2} \nonumber\\
&= \left(\gamma_0^{(i)} + \gamma_1^{(i)}a_1^\dag + \gamma_2^{(i)}b_1^\dag + \gamma_3^{(i)}a_1^\dag b_1^\dag\right)\nonumber\\
&\times\left(\gamma_0^{(i)} + \gamma_1^{(i)}a_2^\dag + \gamma_2^{(i)}b_2^\dag + \gamma_3^{(i)}a_2^\dag b_2^\dag\right)\nonumber\\
&\times\sigma_{A_1L_1}^{\rho}\sigma_{A_2L_2}^{\rho}\ket{0}_{A_1B_1L_1A_2B_2L_2}\,,
\end{align}
where for brevity of notation, we have denoted the modes as $A_1,\ B_1,\ L_1,\ A_2, B_2,\ L_2$ in place of the actual mode labels. When the modes $B_1$ and $A_2$ are projected on the non-Gaussian Bell state, we have
\begin{align}
&\bra{\phi}_{B_1A_2}\ket{\psi^{(i)}}_{A_1B_1L_1}\otimes \ket{\psi^{(i)}}_{A_2B_2L_2}\nonumber\\
&=\left(a\bra{00}_{B_1A_2} + b\bra{01}_{B_1A_2} +c\bra{10}_{B_1A_2} + d\bra{11}_{B_1A_2}\right)\nonumber\\
&\times\sigma_{A_1L_1}^{\rho}\sigma_{A_2L_2}^{\rho}\ket{0}_{A_1B_1L_1A_2B_2L_2}\,,
\end{align}
where
\begin{align}
a&=\frac{(\gamma_0^{(i)}+\gamma_1^{(i)}a_1^\dag)(\gamma_0^{(i)}+\gamma_2^{(i)}b_2^\dag)}{\sqrt{1+q^2}}\nonumber\\ &+\frac{(\gamma_2^{(i)}+\gamma_3^{(i)}a_1^\dag)(\gamma_1^{(i)}+\gamma_3^{(i)}b_2^\dag)q}{\sqrt{1+q^2}} \\
b&=\frac{(\gamma_2^{(i)}+\gamma_3^{(i)}a_1^\dag)(\gamma_0^{(i)}+\gamma_2^{(i)}b_2^\dag)q}{\sqrt{1+q^2}}\\
c&=\frac{(\gamma_0^{(i)}+\gamma_1^{(i)}a_1^\dag)(\gamma_1^{(i)}+\gamma_3^{(i)}b_2^\dag)q}{\sqrt{1+q^2}}\\
d&=\frac{(\gamma_0^{(i)}+\gamma_1^{(i)}a_1^\dag)(\gamma_0^{(i)}+\gamma_2^{(i)}b_2^\dag)q}{\sqrt{1+q^2}}.
\end{align}
The resulting state that is heralded in modes $A_1,\ B_2,\ L_1$ is given by
\begin{align}
&\ket{\psi^{(i+1)}}_{A_1B_2L_1}\nonumber\\
&=\frac{1}{\gamma^{(i+1)}} \left(\gamma_0^{(i+1)} + \gamma_1^{(i+1)}a_1^\dag + \gamma_2^{(i+1)}b_2^\dag+\gamma_2^{(i+1)}a_1^\dag b_2^\dag\right)\nonumber\\
&\times\sigma_{A_1L_1}^{\rho}\ket{0}\,,
\end{align}
where
\begin{align}
&\gamma_0^{(i+1)}=\left({\gamma^{(i)}_0}^2 + q\gamma^{(i)}_2(\gamma^{(i)}_1+\gamma^{(i)}_0\tanh p)\right)\\
&\gamma_1^{(i+1)}=\left(\gamma^{(i)}_0\gamma^{(i)}_1+q\gamma^{(i)}_3(\gamma^{(i)}_1+\gamma^{(i)}_0\tanh \rho)\right)\\
&\gamma_2^{(i+1)}=\left(\gamma^{(i)}_0\gamma^{(i)}_2+q\gamma^{(i)}_2(\gamma^{(i)}_3+\gamma^{(i)}_2\tanh \rho)\right)\\
&\gamma_3^{(i+1)}=\left(\gamma^{(i)}_1\gamma^{(i)}_2 + q\gamma^{(i)}_3(\gamma^{(i)}_3+\gamma^{(i)}_2\tanh \rho)\right),\\
&\gamma^{(i+1)}\nonumber\\
&=\sqrt{\left({\gamma_0^{(i+1)}}^2+{\gamma_2^{(i+1)}}^2\right)+\cosh^2\rho\left({\gamma_1^{(i+1)}}^2+{\gamma_3^{(i+1)}}^2\right)}.
\end{align}
The state can be simplified as
\begin{widetext}
\begin{align}
\ket{\psi^{(i+1)}}_{A_1B_2L_1}=\frac{1}{\gamma^{(i+1)}}\left[\left(\gamma_0^{(i+1)} + \gamma_1^{(i+1)}a_1^\dag\right)\sigma_{A_1L_1}^{\rho}\ket{0}_{A_1L_1}\otimes\ket{0}_{B_2}+\left(\gamma_2^{(i+1)} + \gamma_3^{(i+1)}a_1^\dag\right)\sigma_{A_1L_1}^{\rho}\ket{0}_{A_1L_1}\otimes\ket{1}_{B_2}\right]
\end{align}
\end{widetext}

The end-to-end two-mode state heralded across a repeater chain of $n=2^x,\ x\in\mathbb{N},$ repeater links can be written down by tracing over the environment mode as
\begin{widetext}
\begin{align}
\hat{\rho}_{A_1B_{2^x}} = &\operatorname{Tr}_{L_1}\left(|\psi^{(x+1)}\rangle\langle\psi^{(x+1)}|_{A_1B_{2^x}L_1}\right)\nonumber\\
&=\hat{\rho}_{A_1}^{(0,0)}\otimes |0\rangle\langle 0|_{B_{2^x}}+\hat{\rho}_{A_1}^{(0,1)}\otimes |0\rangle\langle 1|_{B_{2^x}}+\hat{\rho}_{A_1}^{(1,0)}\otimes |1\rangle\langle 0|_{B_{2^x}}+\hat{\rho}_{A_1}^{(1,1)}\otimes |1\rangle\langle 1|_{B_{2^x}},
\end{align}
where
\begin{align}
&\hat{\rho}_{A_1}^{(0,0)}=\frac{1}{{\gamma^{(x+1)}}^2}\left(\gamma_0^{(x+1)} + \gamma_1^{(x+1)}a_1^\dag\right)\hat{\rho}^{th}_{A_1}(\rho)\left(\gamma_0^{(x+1)} + \gamma_1^{(x+1)}a_1\right)\nonumber\\
&\hat{\rho}_{A_1}^{(0,1)}=\frac{1}{{\gamma^{(x+1)}}^2}\left(\gamma_0^{(x+1)} + \gamma_1^{(x+1)}a_1^\dag\right)\hat{\rho}^{th}_{A_1}(\rho)\left(\gamma_2^{(x+1)} + \gamma_3^{(x+1)}a_1\right)\nonumber\\
&\hat{\rho}_{A_1}^{(1,0)}=\frac{1}{{\gamma^{(x+1)}}^2}\left(\gamma_2^{(x+1)} + \gamma_3^{(x+1)}a_1^\dag\right)\hat{\rho}^{th}_{A_1}(\rho)\left(\gamma_0^{(x+1)} + \gamma_1^{(x+1)}a_1\right)\nonumber\\
&\hat{\rho}_{A_1}^{(1,1)}=\frac{1}{{\gamma^{(x+1)}}^2}\left(\gamma_2^{(x+1)} + \gamma_3^{(x+1)}a_1^\dag\right)\hat{\rho}^{th}_{A_1}(\rho)\left(\gamma_2^{(x+1)} + \gamma_3^{(x+1)}a_1\right),
\end{align}
\end{widetext}
where $\hat{\rho}^{th}$ is the thermal state of mean photon number $\rho$. The above density operator can be written in the Fock basis with terms $\langle m_1, m_2|_{A_1B_{2^x}}.\hat{\rho}_{A_1B_{2^x}}.|n_1,n_2\rangle_{A_1B_{2^x}}=\hat{\rho}_{A_1}^{(m_2,n_2)},\ \{m_1,n_1\in\mathbb{W}\}$ and $\{m_2,n_2\in\{0,1\}\},$ where

\begin{align}
&\hat{\rho}_{A_1}^{(0,0)}=\langle \phi^{(0)}_{m_1}|\hat{\rho}^{th}_{A_1}(\rho)|\phi^{(0)}_{n_1}\rangle\nonumber\\
&\hat{\rho}_{A_1}^{(0,1)}=\langle \phi^{(0)}_{m_1}|\hat{\rho}^{th}_{A_1}(\rho)|\phi^{(1)}_{n_1}\rangle\nonumber\\
&\hat{\rho}_{A_1}^{(1,0)}=\langle \phi^{(1)}_{m_1}|\hat{\rho}^{th}_{A_1}(\rho)|\phi^{(0)}_{n_1}\rangle\nonumber\\
&\hat{\rho}_{A_1}^{(1,1)}=\langle \phi^{(1)}_{m_1}|\hat{\rho}^{th}_{A_1}(\rho)|\phi^{(1)}_{n_1}\rangle,
\end{align}
with
\begin{align}
|\phi^{(0)}_{n}\rangle=\left(\gamma_0^{(x+1)}|n\rangle+\gamma_1^{(x+1)}\sqrt{n}|n-1\rangle\right)/\gamma^{(x+1)},\nonumber\\
|\phi^{(1)}_{n}\rangle=\left(\gamma_2^{(x+1)}|n\rangle+\gamma_3^{(x+1)}\sqrt{n}|n-1\rangle\right)/\gamma^{(x+1)}.
\end{align}

Finally, the RCI of the state can thus be calculated from the eigenspectra of the suitably truncated Fock basis density matrices corresponding to $\hat{\rho}_{A_1B_{2^x}}$ and $\hat{\rho}_{A_1}$.

\bibliographystyle{apsrev4-1.bst}
\bibliography{ref2}

\end{document}